\newcommand{\thickhline}{%
    \noalign {\ifnum 0=`}\fi \hrule height 1pt
    \futurelet \reserved@a \@xhline
}
\newcolumntype{"}{@{\hskip\tabcolsep\vrule width 1pt\hskip\tabcolsep}}
\renewcommand{\a}{\alpha}
\renewcommand{\b}{\beta}
\renewcommand{\t}{\theta}
\newcommand{\dif}{\mathrm{d}}
\newtheorem{assumption}{Assumption}
\newcommand{\erf}{\mathrm{\,erf\,}}
\newcommand{\cc}[1]{{\color{blue}#1}}
\title[Repositioning, Ride-matching, and Abandonment in On-demand Ride-hailing Platforms]{Repositioning, Ride-matching, and Abandonment in On-demand Ride-hailing Platforms: A Mean Field Game Approach}
\author{Yunpeng Li}
\affiliation{
  \institution{The Chinese University of Hong Kong, Shenzhen}
  \country{China}}
\email{liyunpeng@cuhk.edu.cn}
\author{Antonis Dimakis}
\affiliation{
  \institution{Athens University of Economics and Business}
  \country{Greece}}
\email{dimakis@aueb.gr}
\author{Costas A. Courcoubetis}
\affiliation{
  \institution{The Chinese University of Hong Kong, Shenzhen}
\country{China}  }
\email{costas@cuhk.edu.cn}
\begin{abstract}
The on-demand ride-hailing industry has experienced rapid growth, transforming transportation norms worldwide. Despite improvements in efficiency over traditional taxi services, significant challenges remain, including drivers' strategic repositioning behavior, customer abandonment, and inefficiencies in dispatch algorithms. To address these issues, we introduce a comprehensive mean field game model that systematically analyzes the dynamics of ride-hailing platforms by incorporating driver repositioning across multiple regions, customer abandonment behavior, and platform dispatch algorithms. Using this framework, we identify all possible mean field equilibria as the Karush-Kuhn-Tucker (KKT) points of an associated optimization problem. Our analysis reveals the emergence of multiple equilibria, some of them characterized by drivers pursuing distant requests, i.e., exhibiting `Wild Goose Chase' behavior, leading to suboptimal system performance.
To mitigate these inefficiencies, we propose a novel two-matching-radius nearest-neighbor dispatch algorithm that eliminates undesirable equilibria and ensures a unique mean field equilibrium for multi-region systems. The algorithm dynamically adjusts matching radii based on driver supply rates, optimizing pick-up times and waiting times for drivers while maximizing request completion rates. Numerical experiments and simulation results show that our proposed algorithm reduces customer abandonment, minimizes waiting times for both customers and drivers, and improves overall platform efficiency.

\end{abstract}
\begin{document}

\begin{titlepage}

\maketitle

\makeatletter \gdef\@ACM@checkaffil{} \makeatother 


\end{titlepage}

\section{Introduction}\label{sec:introduction}
The on-demand ride-hailing industry has undergone significant expansion in recent years. Uber’s span to over 700 cities worldwide and the pervasive presence of ride-hailing services in China signify a remarkable shift in transportation norms for people. Although ride-hailing platforms improve efficiency over traditional taxi services, they still encounter challenges and opportunities for enhancement.

One of the primary challenges lies in the complex nature of ride-hailing platforms, which are composed of a large number of agents—primarily drivers—each exhibiting strategic behavior that can significantly impact the system’s overall performance. Specifically, drivers’ repositioning strategies, driven by their anticipation of future demand and competition, play a crucial role in determining the efficiency and equilibrium of the platform. Modeling and analyzing these strategic interactions require sophisticated tools that can capture the dynamic and decentralized decision-making process of drivers.

Another significant challenge is the propensity of customers to abandon the service if they are not matched with a vehicle in a timely manner. This customer abandonment behavior not only affects the user experience but also has implications for the platform’s revenue and market reputation. Addressing this challenge necessitates a deeper understanding of customer preferences and the development of mechanisms that can balance supply and demand more effectively, ensuring that customer needs are met promptly and efficiently.

Finally, the design of efficient dispatch algorithms is paramount to preventing Wild Goose Chase (WGC) equilibria  \cite{castillo2023}, where drivers chase distant requests inefficiently, leading to increased empty driving time and decreased overall system efficiency. Traditional matching algorithms may induce non-monotonic sojourn times in the driver supply rate, resulting in multiple equilibria and suboptimal outcomes. Innovating in this area requires the development of new algorithms that can dynamically adjust to real-time conditions and optimize the matching process to minimize inefficiencies and maximize service quality.

To model the strategic behavior among a large number of agents (drivers), our study employs the framework of mean field games. This approach simplifies the complex interactions between numerous agents into a representative agent interacting with the mean field, thereby significantly reducing computational complexity. Moreover, when the number of agents is large, this approximation maintains a high degree of accuracy. Given the intricate operational and strategical details involving the platform, passengers, and drivers, constructing an exact model is exceedingly difficult, and precise analysis of such a complex model is nearly impossible. Therefore, we focus on a fundamental mean field game model that captures the core issues.
Adopting an agent-based modeling perspective, we consider a closed queueing network comprising a large number of drivers, where each driver’s sojourn time follows a general distribution. The game model we propose is a continuous-time mean field game with finite state-action space where each driver's decision model is a  semi-Markov decision model (semi-MDP). 
A key idea for analyzing our mean field game is inspired by the analysis of closed queueing networks in \cite{kelly1989}, where the mean sojourn time of a job at a queue is expressed as a function of the queue's throughput. In the context of ride-hailing, this throughput corresponds to  the driver supply rate, defined as the number of available ride-hailing vehicles per unit time within a specified region. Leveraging this insight, we transform the problem of finding mean field game equilibria into identifying the Karush-Kuhn-Tucker (KKT) points of a constrained optimization problem. This pivotal approach, originally applied to mean field games in \cite{cd23}, is further generalized and extended in our work.

Employing the aforementioned mean field game model,  this study examines the impact of customer abandonment behavior on platform dispatch algorithms. Specifically, we delve into the classical nearest-neighbor dispatch algorithm, which pairs the closest available vehicle with a customer in need. Our analysis reveals the potential for WGC. This manifests as the existence of multiple equilibria within the system, one of them being an equilibrium with longer pick-up times compared to the other equilibria. In this equilibrium, distant vehicles are matched with passengers, leading to substantial efficiency losses in the system, which theoretically could become infinitely large. Meanwhile, our analysis reveals that multiple equilibria emerge because the mean pick-up time is a non-monotone function of the driver supply rate in the given region.

To prevent the emergence of such long pick-up time equilibria, we investigate the effects of controlling the matching radius -- the maximum distance within which a match is allowed. Our findings indicate that the careful selection of the matching radius is crucial, as an improper choice may result in multiple equilibria exhibiting inefficiency.
In light of the difficulties associated with analytically determining the optimal matching radius, our study proposes the Two-Matching-Radius Nearest-Neighbor Dispatch Algorithm, which can be viewed as a generalization of the traditional nearest-neighbor dispatch algorithm. This novel approach designs distinct matching radii for two scenarios: the arrival of new customers and the arrival of new available vehicles. Rather than seeking static optimal values for a fixed system, we derive an explicit relationship between the optimal matching radii and the driver supply rate.
Furthermore, we demonstrate that the nearest-neighbor dispatch algorithm utilizing the optimal adjustment of matching radii results in a sojourn time that is a monotone function of the driver supply rate. This allows us to prove the uniqueness of the mean field equilibrium, eliminating any other inefficient equilibrium. By minimizing the sojourn time for each given driver supply rate through our optimal matching radius, our solution, besides eradicating undesirable equilibria, also maximizes the system’s throughput, thereby optimizing overall system efficiency for the ride-hailing platform.

In summary, our main contributions are:
    \begin{itemize}
  \item  We introduce a mean field game model to holistically analyze the ride-hailing systems, incorporating key aspects such as driver's repositioning across multiple regions, customer's abandonment behavior, and platform's dispatch algorithms. This unified model allows for the study of their joint effects on system performance.
  
  \item In this general multi-region mean field game framework, we identify all possible equilibria as the Karush-Kuhn-Tucker (KKT) points of an associated optimization problem. Our model reveals the emergence of multiple equilibria, some of them exhibiting large pick-up times -- the `Wild Goose Chase' behavior in ride-hailing systems extensively discussed in the literature \cite{castillo2023}. In such an equilibrium, drivers are matched with distant requests and consequently the system efficiency is compromised.

  \item  We relate for the first time the choice of the matching radius with generating multiple system-wide equilibria. We propose a simple yet innovative `two-matching-radius nearest-neighbor' dispatch algorithm. This algorithm ensures a unique mean field equilibrium for the multi-region model. This implies a determinate behaviour for the whole system by eliminating multiple equilibria, some of them inefficient.
  
  \item For a single region, our proposed dispatch algorithm minimizes the total pick-up time and waiting time for drivers, thereby maximizing the request completion rate. Numerical experiments and simulation results further demonstrate that the algorithm  minimizes customer's waiting time, reduces customer abandonment, and enhances overall efficiency of the ride-hailing system.

     \end{itemize}

\subsection{Literature Review}
Our work is related to the following streams of literature.

\textbf{Repositioning in Ride-hailing Platforms.} 
\cite{dairouting} studied the repositioning of empty
cars in a spatial ride-sharing network where a platform controls the repositioning of all vehicles. \cite{afeche2023} studied  strategical repositioning in ride-hailing networks, where drivers are allowed to make repositioning decisions to maximize their earnings. \cite{benjaafar2021a} and \cite{gao2024} considered a combination of decentralized and centralized repositioning to investigate the impact of autonomous vehicles on ride-hailing platforms. Our work is closely related to \cite{gao2024} in that we similarly consider a mean field game model and characterize the mean field equilibrium by an optimization problem. In the above literature, the matching procedure is simplified assuming a fluid model. It  explicitly or implicitly assumes that rider requests are lost if not matched instantly with an available driver. In our model, both riders and drivers wait, and waiting riders abandon the system at a given rate. Our model includes a  detailed matching procedure between riders and drivers, which affects  the strategic repositioning behavior of drivers.
 
 \textbf{Ride-matching and Abandonment.} Another stream of research  close to our work is the study of the performance of matching  in ride-hailing systems. \cite{Besbes2022} considers 
 nearest-neighbor dispatch and uses a spatial queueing model to derive the service rate as  a function of pick time and en-route time. \cite{wang2022}  extends the spatial model in \cite{Besbes2022} to include admission control for matching based on a pickup-time threshold. In addition, abandonment and cancellation on the customer's side are investigated in \cite{wang2022}. Our model inherits the customer's abandonment behavior model from \cite{wang2022}, and considers a two-sided queueing model with an open customer-side and a closed driver-side queueing network. \cite{castillo2023} investigated  the `Wild Goose Chase' caused by long pick-up times. \cite{castillo2023} suggested that setting a maximum matching radius could improve the system performance. \cite{Yang2020} studied the impact of matching radius and matching time interval on the system performance. \cite{wang2022}   and \cite{Yang2020} both considered batch matching while \cite{wang2022} proposed to manipulate the matching radius and matching time interval indirectly through a pickup-time threshold. 
Unlike \cite{wang2022} and \cite{Yang2020}, we focus on instant matching using a nearest-neighbor dispatch approach. Specifically, an arriving customer is immediately matched with the closest available idle driver if one is present, and an arriving driver is immediately matched with the closest waiting customer if one is available.  Thus, our work views matching radius  as the key decision variable of the platform. Our research contributes to this stream of research by introducing a novel two-matching-radius nearest-neighbor dispatch algorithm. The dynamical optimal adjustment of the matching radii proposed in this work offers a solution that mitigates the WGC problem.

  \textbf{Mean Field Game.} Another area closely related to our work is mean-field games, which investigate the repeated interactions among a large number of strategic agents (\cite{jovanovic,huang,lions,hopenhayn,adlakha}). Our mean-field game model is of finite type, featuring finite state and action spaces. Existing finite mean-field game models in the literature are typically based on either discrete-time Markov Decision Processes (MDPs) \cite{guo2022} or continuous-time MDPs with exponentially distributed sojourn times \cite{neumann2020}. In our model, we model each driver’s decision-making by continuous-time semi-Markov decision processes (SMDPs) where sojourn times follow general distributions. In fact, our model incorporates a closed queueing network, where drivers can be viewed as jobs navigating through the network.    
 To analyze the closed queueing network, we adopt the approach proposed in \cite{kelly}, which represents mean sojourn times as functions of throughputs (corresponding to driver supply rates in the ride-hailing model). This approach enables us to determine the throughputs within the closed queueing network. Building on this foundation, we utilize the methodology outlined in \cite{cd23} to formulate an optimization problem—specifically, the minimization of a potential function—to compute mean-field equilibria. However, the complexity of the closed spatial queueing network inherent in ride-hailing systems often results in sojourn times that are non-monotone functions of throughput, thereby violating the monotonicity assumptions presented in \cite{kelly1989} and \cite{cd23}. To address this challenge, we extend their methodologies to accommodate non-monotone cases. Furthermore, by optimally designing the dispatch algorithm and matching radius, we can ensure that sojourn times exhibit monotonicity, thereby satisfying the assumptions required in \cite{kelly1989} and \cite{cd23}.

\subsection{Paper Organization}
In Section 2, we describe each driver's  continuous-time semi-Markov Decision Process model and formulate a mean field game model. In Section 3, we investigate the ride-matching algorithm under abandonment behavior, and derive the sojourn times as exact functions of  driver supply rates. 
In Section 4, we provide the
optimal matching radius calculation that eliminates multiple equilibria.
In Section 5, we conduct simulations to evaluate the performance of our proposed dispatch algorithm. We conclude the paper in Section 6. 

\section{Mean Field Game Model}
\label{sec:model}

This section introduces a mean-field game model that focuses on modeling drivers' strategic repositioning behavior across multiple regions. The details concerning customer behavior and platform strategy will be discussed in subsequent sections.

\subsection{A Semi-Markov Decision Process}
We consider a \emph{closed network} consisting of a total mass $m$ of drivers operating across $N$ regions, labeled $1, 2, \dots, N$. These regions represent distinct geographic areas within a city. Driver behavior is modeled as a continuous-time semi-Markov decision process (semi-MDP) with state space $S = {1, 2, \dots, N}$ the set of regions. The state of the MDP represents the region in which a driver just deposited a customer or has arrived without one.
The action space, denoted by $A = {1, 2, \dots, N}$, represents the driver’s choice of seeking their next customer. Since both states and actions are indexed by regions, we have $S = A$.

For a given state-action pair $(i, k) \in S \times A$, the choice $k = i$ indicates that the driver remains in region $i$ until a next customer is found locally, whereas $k \ne i$ corresponds to a `repositioning' decision: the driver travels empty to region $k$, where they will make a new decision upon arrival.

\textbf{Transition Probability.} The transition probabilities are encapsulated in the tensor 
\(P=(P^k_{ij})_{ k\in A\atop i,j\in S}\), where $P^k_{ij}$ denotes the probability of transitioning from state $i$  to state $j$ under action $k$. Let $(q_{ij})_{i,j\in S}$  denote the `demand matrix', where $q_{ij}$ is the probability that a passenger in region $i$ has a destination in region $j$. Using the demand matrix, we are able to define the transition matrix of the semi-MDP as follows.

\begin{itemize}
    \item $P_{ij}^k, k\not=i$. The action of the driver is to reposition (without customer) to region $k\not=i$. Hence, $P_{ij}^k=\delta_{jk}$, where $\delta_{jk}=1$ if $j=k$ and $0$ otherwise.
    \item $P_{ij}^k, k=i$. The action of the driver is to seek a customer in the same region he  currently is, and this customer will dictate his next region where he will be free. Hence, $P_{ij}^i =q_{ij}$.
\end{itemize}


\textbf{Customer Waiting.}
For each region $i$, customers arrive at a  rate of $b_i>0$ and wait in queues if they are not matched with drivers upon arrival.  Waiting customers abandon the system at a rate $\theta_i$ specific to each region. A detailed analysis of customers’ abandonment behavior is provided in Section \ref{section:analysisofequilibria} and is not required to derive the results in this section; instead, we only rely on the general structural properties of the drivers' sojourn times to characterize the equilibria of the game.

\textbf{Sojourn Times.} For a state-action pair $(i,k)$, it takes a certain amount of time for a driver to complete action  $k$
 when they are in state/location $i$ (to serve a customer in region $i$ when $k=i$ or reposition to $k$ if $k\not=i$ ). This is also referred to as  the \emph{sojourn time} for the state-action pair $(i,k)$.  Let $T_{ik}$ denote the mean time taken for a driver to finish action $k$ from state/location $i$.
The sojourn time is a result of interactions across the platform, customers, and drivers, encompassing elements such as the driver's wait time until a match, the duration of pick-ups, and travel times between locations. Factors influencing the determination of sojourn time include customer behavior, platform matching policies, and drivers' strategies.
Under appropriate assumptions and within specific customer behavior models and platform matching policies, we will show that the mean sojourn time can be calculated based on the distribution of driver masses over the state-action space and their respective mass rates.

\textbf{Steady State Analysis.} We consider a non-atomic game model where a representative driver makes decisions against the mass behavior of all other drivers, characterized by the mass distribution over $S\times A$. This representative driver is negligible and  exerts no influence on the system, including the mass distribution and corresponding mass rate.  This approach is reasonable given the enormous number of drivers in the system, their anonymity, symmetry, and the sole differentiation based on their locations.
 We focus on \emph{stationary equilibria} and study \emph{stationary systems}, which is sensible since each driver perceives the system to be in a steady state, enabling observation of mass distributions and estimation of sojourn times effectively. Our approach involves a mean field based analysis that operates with the averages of the sojourn times and associated metrics, a methodology commonly employed in related literature \cite{dairouting,Besbes2022,wang2022,cd23}.

 For each state-action pair $(i,k)\in S\times A$, let $\mu_{ik}$ represent the mass occupying that specific state-action pair. 
For any total driver mass $m>0$, let $\mathcal{U}_m(S\times A)=\{\boldsymbol{\mu}\in\mathbb{R}_+^{N^2}\mid \sum_{(i,k)\in S\times A}\mu_{ik}$$=m\}$, i.e., the set of all possible mass distributions over $S\times A$. Let $x_{ik}$ denote the long-run average  mass rate for each state-action pair $(i,k)$ and $x\in\mathbb{R}_+^{N^2}$ denote the average mass rate vector\footnote{In other words, $x_{ik}$ drivers, on average, arrive in region $i$ and perform action $k$ within one unit of time.
}.  We will establish the relationship between $x$ and $\mu$  in the following.
It should be noted that $x_{ii}$ denotes the average mass rate of available drivers willing to provide services within region $i$, in other words, the average trip supply rate in region $i$.
 It is clear that in any stable equilibrium the average  trip supply rate is lower than the customer arrival rate, ensuring $x_{ii}\le b_i$.

  A system characterized by  mass distribution $\mu$ and corresponding mass rate vector $x$ is considered stationary when   $x$ satisfies the following balance equation,
  \begin{equation}
      \label{eq:stationary}
      \sum_{k\in A} x_{ik}=\sum_{{k\in S},k\not=i}x_{ki}+\sum_{k\in S}x_{kk}q_{ki},\>  \forall i\in S\,,
  \end{equation}
which is interpreted as the mass flow out of state/location $i$ (on the left-hand side) should equal the  mass flow into state/location $i$ (on the right-hand side), where the latter is composed of the flow of empty cars for repositioning and the flow of cars carrying passengers.
Henceforth, unless explicitly stated otherwise, we focus on stationary systems.

We now determine the sojourn times in a stationary system, where all quantities refer to their expected (mean) values.  
Let $t_{ij}$ denote the expected trip duration from region $i$ to region $j$; in particular, $t_{ii}$ represents the expected duration of a trip within region $i$.

Consider any action $(i, k) \in S \times A$. If $k \ne i$, the vehicle performs an empty repositioning trip from region $i$ to region $k$, and the sojourn time is simply the travel time: $T_{ik} = t_{ik}$.

If $k = i$, the vehicle remains in region $i$ to serve a customer, and the sojourn time consists of three components:
\begin{enumerate}
    \item the expected waiting time until the driver is matched to a customer, denoted $w_i^d$;
    \item the expected pick-up time, denoted $\tau_i$; and
    \item the expected trip time, denoted $t_i$.
\end{enumerate}

The trip time $t_i$ is computed as the average over all possible customer destinations, weighted by the demand probabilities:
\[
t_i = \sum_{j \in S} t_{ij} q_{ij}.
\]

Note that both the driver’s waiting time $w_i^d$ and the pick-up time $\tau_i$ are affected by the customer arrival process and the platform’s matching policy.
The relationships between the driver's average waiting time, pick-up time, and mass rate can be established in each region based on the customer behavior model and the platform's matching policy. We will investigate these relationships in detail in subsequent sections. However, in this section, rather than deriving their exact forms, we focus on their \textbf{structural properties}, which are sufficient for characterizing and proving general equilibrium properties. These properties should hold in general ridehailing scenarios when customers are impatient, and we will formally establish them within our modeling framework in subsequent sections.


\begin{assumption}\label{assum:function_of_x}
For a given vector of customer arrival rates and platform matching policy, when the system is in a stationary state, the following holds for every \( i \in S \):
\begin{enumerate}
    \item The driver's mean waiting time \( w_i \) and mean pick-up time \( \tau_i \) are positive and can be written as functions of the mass rate of available drivers \( x_{ii} \). Consequently, the mean total sojourn time \( T_{ii} \) can also be expressed as:
    \[
    T_{ii}(x_{ii}) = w_i^d(x_{ii}) + \tau_i(x_{ii}) + t_i,
    \]
    where \( t_i = \sum_{j \in S} t_{ij} q_{ij} \) is a constant.
    \item As \( x_{ii} \rightarrow b_i \) (the demand-supply limit), the waiting time grows unboundedly:
    \[
    w_i^d(x_{ii}) \rightarrow \infty.
    \]
\end{enumerate}
\end{assumption}

The idea of expressing \( T_{ii} \) as a function of throughput (\( x_{ii} \)) stems from closed queuing network analysis \cite{kelly1989}. In Section \ref{section:analysisofequilibria}, we will demonstrate that this assumption holds for our matching mechanism and customer abandonment model, providing explicit formulae for validation.
The condition \( w_i^d(x_{ii}) \rightarrow \infty \) as \( x_{ii} \rightarrow b_i \) arises from \textbf{customer impatience}: in order to serve all customers, given that customers abandon the service at a positive rate, we need an unbounded idle driver pool.


For notational uniformity, we write the sojourn time as \( T_{ik}(x) \) for all \( (i, k) \in S \times A \), even though it depends only on \( x_{ik} \) when \( k = i \) and remains constant otherwise.

Given Assumption \ref{assum:function_of_x}, the mass rate satisfies the following equation for a stationary system,
\begin{equation}\label{eq:little_law}
    \mu_{ik}=x_{ik}T_{ik}(x),\>\forall(i,k)\in S\times A.
\end{equation}
As the total mass is $m$, immediately, we obtain the conservation law of mass,
\begin{equation}
    \label{eq:conservation}
    \sum_{(i,k)\in S\times A}x_{ik}T_{ik}(x) =\sum_{i\in S} x_{ii}\big(w_i^d(x_{ii})+\tau_i( x_{ii})+t_i\big)+\sum_{i,j\in S\atop i\not=j}x_{ij}t_{ij} =m.
\end{equation}

\textbf{Driver's Strategy.} In this section, our focus is on the repositioning strategy of drivers. After completing a trip, drivers face a decision problem where they must choose between serving their current location or repositioning to another location that potentially offers more customers and higher income. 
We assume that every driver follows a stationary repositioning strategy, meaning that their choice of action is solely dependent on states (the current location) and is independent of time. A driver's stationary strategy is characterized by the matrix $(\pi_{ik})_{i\in S, k\in A}$. Here, $\pi_{ik}$ represents the probability that a driver selects action $k$ where $\sum_{k\in A}\pi_{ik}=1$. Given a stationary mass rate $x$, the aggregate stationary repositioning strategy, which amalgamates the strategies of all drivers, can be derived as follows,
\begin{equation}\label{eq:rate_to_strategy}
    \pi_{ik}(x)=\frac{x_{ik}}{\sum_{k\in A} x_{ik}}\,,
\end{equation}
for state $i$ with a positive denominator, while $\pi_{ik}=0$ otherwise.  In fact, the mass rates encapsulate the full information regarding drivers' strategies. As we will demonstrate later in Proposition \ref{prop:ne_mass_rate}, the optimality of drivers' strategies is derived in terms of the vector of mass rates.

\textbf{Reward Structure.} In this model, we assume\footnote{We can extend the cost model and our analysis still holds.} a normalization of zero operating costs for vehicles and zero opportunity costs for drivers. We do not address the participation problem for drivers and assume that all drivers are willing to join the platform and adhere to the matching arrangements set by the platform.  We assume that rewards are continuously generated at a compensatory rate of $c_i>0$ solely during the time when a driver is transporting a customer from the customer's pick-up location to their destination. Consequently, with each trip originating from region $i$, a driver can, on average, earn $c_it_i$. Given a stationary system, the average reward per driver per unit of time, denoted by $\Phi$, is a function of the stationary mass rate vector $x$ given by
\begin{equation}\label{eq:average_reward}
    \Phi(x) =  \frac{1}{m}\sum_{i\in S}x_{ii}c_it_i.
\end{equation}
This study focuses on the semi-Markov decision process with an average reward criterion where each driver aims to maximize their individual average reward over an infinite time horizon. Let  $(\tilde{\pi}_{ik})_{i\in S, k\in A}$ be a reposition strategy for a representative driver.   Let $\tilde{x}=(\tilde{x}_{ik})_{i\in S, k\in A}\in \mathbb{R}_+^{ N^2 }$ be the representative driver's mass rate where  for any $i$ and $k$, $\tilde{x}_{ik}$ represents the average number of times that this driver arrives in region $i$ and takes action $k$ within one unit of time. Then $\tilde{\pi}$ and $\tilde{x}$ are related through an equation similar to \eqref{eq:rate_to_strategy},
\begin{equation}
    \label{eq:rate_to_strategy_tilde}
    \tilde{\pi}_{ik}=\frac{\tilde{x}_{ik}}{\sum_{k\in A} \tilde{x}_{ik}}
\end{equation}
Given  the system's aggregated stationary mass rate $x$, the representative driver's $\tilde{x}$ satisfies the conservation law and balance equation as follows,
\begin{subnumcases}{}
       \sum_{i\in S} \tilde{x}_{ii}\big(w_i^d(x_{ii})+\tau_i( x_{ii})+t_i\big)+\sum_{i,j\in S\atop i\not=j}\tilde{x}_{ij}t_{ij} =1\,,\label{eq:balance_tilde}\\
        \sum_{j\in S} \tilde{x}_{ij}=\sum_{{j\in S},j\not=i}\tilde{x}_{ji}+\sum_{j\in S}\tilde{x}_{jj}q_{ji},\>  \forall i\in S\,.\label{eq:stationary_tilde}
\end{subnumcases}
The representative driver's objective is to choose a reposition strategy  
$\tilde{\pi}$ to maximize his own average reward $\sum_{i\in S}\tilde{x}_{ii}c_it_i$.

In summary, we have defined a non-atomic game, denoted by $\mathcal{G}$,  through the tuple \[\left \langle m,S,A,\big(P^k_{ij}\big)_{i,j\in S,k\in A},\Big(T_{ij}(\cdot)\Big)_{i,j\in S},(c_i)_{i\in A}\right \rangle.\]
The game $\mathcal{G}$ belongs to a mean field type.   A mean field game (MFG) is a game-theoretical framework that models the strategic decision-making of a large number of interacting agents. 
In MFGs, each agent interacts with the  `mean field' (in our context, the vector $\mu$ of agent mass per state-action pair), which represents the average effect of the strategies taken by  all the other agents. In our model, a key mean field term  is the  sojourn time  $\big(T_{ij}(\cdot)\big)_{i,j\in S}$, which implicitly depends on the mean field $\mu$  through the mass rate vector $x$. In this paper, we make no assumptions that the reader is familiar with  MFG theory, and our analysis is self-contained.

\subsection{Stationary Equilibrium}

We give the definition of a stationary equilibrium of the game $\mathcal{G}$.
\begin{definition}\label{def:ne}
   A stationary equilibrium of game $\mathcal{G}$ is a vector $(\mu^\dagger,x^\dagger)\in\mathcal{U}_m(S\times A)\times \mathbb{R}_+^{N^2}$ such that
   
   \begin{itemize}
       \item[(a)] $x^\dagger$ is the  mass rate under the  mass distribution $\mu^\dagger$ and it is stationary, i.e., $(\mu^\dagger,x^\dagger)$ satisfies \eqref{eq:stationary}, \eqref{eq:little_law}, and \eqref{eq:conservation}.
       \item[(b)] Given the aggregate behavior of all the drivers  characterized by $(\mu^\dagger,x^\dagger)$, the average reward per unit mass $\Phi(x^\dagger)$ in \eqref{eq:average_reward}  is the optimal average reward for a representative driver. In other words, the behavioral strategy (aggregated strategy) $\pi(x^\dagger)$ in \eqref{eq:rate_to_strategy} is the best response against $(\mu^\dagger,x^\dagger)$.
   \end{itemize}
\end{definition}

The second condition states that no driver can achieve a higher average reward by deviating from the current aggregated behavioral strategy.  A stationary equilibrium of the game can be fully characterized by the mass rate vector, as stated in the following proposition.
\begin{proposition}
\label{prop:ne_mass_rate}
    $x^\dagger$ is an equilibrium mass rate of game $\mathcal{G}$ if and only if $x^\dagger$ is an optimal solution of the following optimization problem,
    \begin{subequations}\label{eq:ne_mass_rate}
      \begin{align}
\max \quad& \sum_{i\in S}x_{ii}c_it_i\tag{\ref{eq:ne_mass_rate}}\\
\mbox{s.t. } &\sum_{i\in S}  x_{ii}\big(w_i(x^\dagger_{ii})+\tau_i(x^\dagger_{ii})+t_i\big)+\sum_{i,j\in S\atop i\not=j}x_{ij}t_{ij} =m,\label{conservation}
\\&  \sum_{j\in S}x_{ij}=\sum_{j\in S,j\not=i}x_{ji}+\sum_{j\in S}q_{ji}x_{jj},\>  i\in S, \label{balance_ne}\\
\mbox{over } & x=(x_{ij})_{i,j\in S}\in\mathbb{R}_+^{ N^2 }.
\end{align}
\end{subequations}
\end{proposition}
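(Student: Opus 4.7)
The plan is to reduce Proposition \ref{prop:ne_mass_rate} to the standard linear-programming characterization of average-reward semi-Markov decision processes, exploiting the fact that the representative driver is non-atomic. Once the aggregate mass-rate vector $x^\dagger$ is fixed, every sojourn time $T_{ik}(x^\dagger)$ becomes a constant from the representative driver's viewpoint, since a single driver cannot perturb $x^\dagger$. The driver's decision problem therefore reduces to a stand-alone finite-state, average-reward semi-MDP with transition tensor $P$, constant sojourn times, and per-visit rewards equal to $c_it_i$ for $(i,i)$ and $0$ for $(i,k)$ with $k\ne i$. By the classical LP formulation of such semi-MDPs, its optimal per-unit-time average reward equals the optimum of
\begin{equation*}
\max_{\tilde x\ge 0}\Big\{\textstyle\sum_{i\in S}\tilde x_{ii}c_it_i\;:\;\eqref{eq:balance_tilde}\text{ and }\eqref{eq:stationary_tilde}\text{ hold}\Big\},
\end{equation*}
and every optimal stationary policy $\tilde\pi$ is recovered from some optimal $\tilde x$ through \eqref{eq:rate_to_strategy_tilde}. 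I would first justify that our $P$, $r_{ik}$, and $T_{ik}(x^\dagger)$ reproduce exactly the constraints \eqref{eq:stationary_tilde}--\eqref{eq:balance_tilde} when instantiated in the standard LP.

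Next I would establish the scaling bijection between the individual and aggregate formulations. By symmetry and non-atomicity, if every driver follows the common stationary strategy encoded by an individual visit-rate vector $\tilde x$, then the aggregate mass-rate vector of the population is $x=m\tilde x$. Under this change of variables, the balance equations \eqref{eq:stationary_tilde}/\eqref{balance_ne} are homogeneous and preserved, the normalization $\sum_{i,k}\tilde x_{ik}T_{ik}(x^\dagger)=1$ scales to the conservation law \eqref{conservation}, and the objective scales by $m$, so the argmax is preserved. In particular, the per-driver average reward achieved when the aggregate profile is $x^\dagger$ equals $\Phi(x^\dagger)=\frac{1}{m}\sum_i x^\dagger_{ii}c_it_i$, which is exactly $1/m$ times the value of \eqref{eq:ne_mass_rate} at $x^\dagger$.

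With this correspondence in hand the two directions are straightforward. ($\Leftarrow$) If $x^\dagger$ is optimal for \eqref{eq:ne_mass_rate}, its feasibility yields \eqref{eq:stationary} and \eqref{eq:conservation}, while defining $\mu^\dagger_{ik}:=x^\dagger_{ik}T_{ik}(x^\dagger)$ furnishes \eqref{eq:little_law}, establishing condition (a); by the bijection, $\tilde x:=x^\dagger/m$ is optimal for the representative driver's LP, so $\Phi(x^\dagger)$ equals the maximum individual average reward, which is condition (b). ($\Rightarrow$) Conversely, an equilibrium $(\mu^\dagger,x^\dagger)$ is feasible for \eqref{eq:ne_mass_rate} by (a); if some $x$ feasible for \eqref{eq:ne_mass_rate} achieved a strictly larger objective, then $x/m$ would be feasible for the individual LP with strictly larger value than $\Phi(x^\dagger)$, contradicting (b), so $x^\dagger$ attains the maximum in \eqref{eq:ne_mass_rate}.

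The main obstacle is conceptual rather than computational: one must cleanly separate the frozen mean-field environment from the deviating agent's decision problem and invoke the LP theory for average-reward semi-MDPs with finite state/action spaces. The apparent nonlinearity introduced by $T_{ii}(\cdot)$ is harmless because the non-atomic deviator faces constants, not functions, so her problem is a genuine linear programme. Two minor technical checkpoints are: (i) Assumption \ref{assum:function_of_x}, together with $w_i^d(x_{ii})\to\infty$ as $x_{ii}\to b_i$, bounds the feasible set of \eqref{eq:ne_mass_rate} and thus guarantees attainment of the maximum; and (ii) the LP optimum for the individual semi-MDP is attained by a stationary (possibly randomized) policy, which is the form of the behavioural strategy $\pi(x^\dagger)$ read off from \eqref{eq:rate_to_strategy}.
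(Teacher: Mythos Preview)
Your proposal is correct and follows essentially the same route as the paper: freeze the sojourn times at $x^\dagger$, write the representative driver's problem as the linear programme \eqref{eq:balance_tilde}--\eqref{eq:stationary_tilde} with unit-mass normalization, and then exploit the homogeneity of the constraints under the scaling $x=m\tilde x$ to pass between the individual LP and \eqref{eq:ne_mass_rate}. The only cosmetic difference is that you invoke the standard semi-MDP LP theory to justify the form of the individual problem, whereas the paper simply reads it off from the already-derived constraints \eqref{eq:balance_tilde}--\eqref{eq:stationary_tilde}; the scaling bijection and the two implications are handled identically.
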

\begin{proof}
   Consider a  mass rate $x^\dagger\in\mathbb{R}_+^{ N^2 }$, we will identify the conditions for  $x^\dagger$ being  an equilibrium mass rate. Note that we can compute the corresponding distribution $\mu^\dagger$ by \eqref{eq:little_law}. Hence, the first condition for   $x^\dagger$ being  an equilibrium mass rate, namely, Definition \ref{def:ne}(a), is equivalent to that $x^\dagger$ satisfies \eqref{conservation} and \eqref{balance_ne} as per \eqref{eq:conservation} and \eqref{eq:stationary} respectively. 
    Let $\tilde{x}=(\tilde{x}_{ik})_{i\in S, k\in A}\in \mathbb{R}_+^{ N^2 }$ be the representative driver's mass rate.
    Since the representative driver's strategy is encoded in  $\tilde{x}$ through \eqref{eq:rate_to_strategy_tilde}, he seeks to manipulate $\tilde{x}$ to maximize his average reward, i.e. $\sum_{i\in S}\tilde{x}_{ii}c_it_i$. 
    A feasible  $\tilde{x}$ must satisfy \eqref{eq:stationary_tilde} in a stationary system. Moreover, as the sojourn time is determined by the aggregate mass rate $x^\dagger$, similar to \eqref{eq:balance_tilde}, a feasible $\tilde{x}$ must satisfy that 
    $$\sum_{i\in S}  \tilde{x}_{ii}\big(w_i(x^\dagger_{ii})+\tau_i(x^\dagger_{ii})+t_i\big)+\sum_{i,j\in S\atop i\not=j}\tilde{x}_{ij}t_{ij} =1.$$ Thus, the representative driver's optimization problem is, 
        \begin{subequations}\label{eq:ne_mass_rate_1}
      \begin{align}
\max \quad& \sum_{i\in S}\tilde{x}_{ii}c_it_i\tag{\ref{eq:ne_mass_rate_1}}\\
\mbox{s.t. } &\sum_{i\in S}  \tilde{x}_{ii}\big(w_i(x^\dagger_{ii})+\tau_i(x^\dagger_{ii})+t_i\big)+\sum_{i,j\in S\atop i\not=j}\tilde{x}_{ij}t_{ij} =1,\label{conservation_1}
\\&  \sum_{j\in S}\tilde{x}_{ij}=\sum_{j\in S,j\not=i}\tilde{x}_{ji}+\sum_{j\in S}q_{ji}\tilde{x}_{jj},\>  i\in S, \label{balance_ne_1}\\
\mbox{over } & \tilde{x}=(\tilde{x}_{ij})_{i,j\in S}\in\mathbb{R}_+^{ N^2 }.
\end{align}
\end{subequations}
The second condition for   $x^\dagger$ being  an equilibrium mass rate, namely, Definition \ref{def:ne}(b), is equivalent to that $\tilde{x}=\frac{1}{m}x^\dagger$ solves the optimization  problem \eqref{eq:ne_mass_rate_1}. 
Therefore, the two conditions for  $x^\dagger$ being  an equilibrium mass rate, i.e., Definition \ref{def:ne}(a) and Definition \ref{def:ne}(b) together, are equivalent to that  $x^\dagger$ solves the optimization  problem \eqref{eq:ne_mass_rate}. 
\end{proof}

Proposition \ref{prop:ne_mass_rate} provides a way to verify  whether a given  $x^\dagger$ is an equilibrium mass rate of game $\mathcal{G}$. We first compute the sojourn times $w_i(x^\dagger_{ii})$ and  $\tau_i(x^\dagger_{ii})$ to obtain the linear constraints  \eqref{conservation} and then check  if $x^\dagger$  solves the linear optimization problem \eqref{eq:ne_mass_rate}.

A key result of this paper is the establishment of a one-to-one correspondence between the stationary equilibria and the Karush-Kuhn-Tucker (KKT) points of an optimization problem. Note that for a general optimization problem, KKT points refer to those points that satisfy the KKT conditions.

\begin{theorem} \label{thm:ne_opt} Suppose  Assumption \ref{assum:function_of_x} is true,
     an $x^\dagger\in\mathbb{R}_+^{ N^2 }$ is an equilibrium mass rate  of game $\mathcal{G}$ if and only if $x^\dagger$ satisfies the KKT conditions of the optimization problem:
  \small  \begin{align}
\max \quad& m\log \bigg(\sum_{i\in S}x_{ii}c_it_i\bigg)-\sum_{i\in S}  x_{ii}t_i-\sum_{i,j\in\mathcal{S}\atop i\not=j}x_{ij}t_{ij}
-\sum_{i\in S}\int_0^{x_{ii}}w_i^d(s)\dif s-\sum_{i\in S}\int_0^{x_{ii}}\tau_i(s)\dif s \label{eq:opt_obj}\\
\mbox{s.t. } &  \sum_{j\in S} x_{ij}=\sum_{j\in S,j\not=i}x_{ji}+\sum_{j\in S}q_{ji}x_{jj},\>  i\in\mathcal{S},\nonumber \\
\nonumber
\mbox{over } & x=(x_{ij})_{i,j\in S}\in\mathbb{R}_+^{ N^2 }.
\end{align}
\end{theorem}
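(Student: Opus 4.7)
My strategy is to combine Proposition \ref{prop:ne_mass_rate} with a direct comparison of two KKT systems. By that proposition, $x^\dagger$ is an equilibrium iff it is optimal for the linear program \eqref{eq:ne_mass_rate}, and because that LP has only linear constraints, optimality is equivalent to its own KKT conditions. Thus the theorem reduces to showing that the KKT conditions of the LP at $x^\dagger$ coincide with the KKT conditions of \eqref{eq:opt_obj} at $x^\dagger$.

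First I would write both KKT systems explicitly. Let $\lambda\in\mathbb{R}$ be the multiplier for \eqref{conservation} and $\eta_i\in\mathbb{R}$ the multiplier for \eqref{balance_ne} at state $i$. Stationarity of the LP in $x_{ii}$ and in $x_{ij}$ ($j\ne i$), with complementary slackness in $x\ge 0$, yields
\[
c_i t_i \;\le\; \lambda\, T_{ii}(x^\dagger_{ii}) + \eta_i - \sum_k q_{ik}\eta_k, \qquad \eta_j - \eta_i \;\le\; \lambda\, t_{ij},
\]
with equality whenever the corresponding $x^\dagger_{ij}>0$. For \eqref{eq:opt_obj}, let $\nu_i$ be the multipliers for the flow-balance constraints. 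Differentiating the integrals by the fundamental theorem of calculus, stationarity becomes
\[
\hat\lambda\, c_i t_i - T_{ii}(x^\dagger_{ii}) \;\le\; \nu_i - \sum_k q_{ik}\nu_k, \qquad \nu_j - \nu_i \;\le\; t_{ij},
\]
where $\hat\lambda := m/\sum_j x^\dagger_{jj} c_j t_j$, again with equality when $x^\dagger_{ij}>0$. Under the rescaling $\lambda = 1/\hat\lambda$ and $\eta_i = \nu_i/\hat\lambda$ (valid once $\hat\lambda>0$), the two stationarity systems become identical, and the complementary-slackness and primal-feasibility pieces transfer verbatim.

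The nontrivial point is that this rescaling is forced, not an ad hoc choice: any $\lambda$ appearing in an LP-KKT triple $(x^\dagger,\lambda,\eta)$ must equal $\sum_i x^\dagger_{ii} c_i t_i / m$, i.e.\ $1/\hat\lambda$. I would prove this by multiplying each LP stationarity equation (in the equality form, valid wherever $x^\dagger>0$) by the corresponding $x^\dagger_{ij}$, summing over all $(i,j)$, and invoking \eqref{conservation} to collapse the $\lambda$-terms into $\lambda m$ and \eqref{balance_ne} at $x^\dagger$ to cancel all $\eta$-terms. The cancellation step is the main obstacle: after regrouping, the coefficient of each $\eta_k$ works out to the left-hand side minus the right-hand side of the $k$-th flow-balance equation at $x^\dagger$, hence vanishes. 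A brief aside is needed to ensure $\hat\lambda$ is finite and positive, i.e., $\sum_i x^\dagger_{ii} c_i t_i > 0$ at any equilibrium; this holds because otherwise a representative driver could strictly improve by serving customers in any region with $b_i>0$, contradicting best response.
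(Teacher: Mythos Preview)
Your proof is correct and follows essentially the same route as the paper: reduce to Proposition~\ref{prop:ne_mass_rate}, compare the two KKT systems, and use the multiply-by-$x^\dagger_{ij}$-and-sum trick (with the balance equations killing the $\eta$-terms) to pin down $\lambda$. The paper's only twist is to first replace the LP objective by $m\log\bigl(\sum_i x_{ii}c_it_i\bigr)$, which makes that same computation yield $\lambda=1$ directly and so avoids your explicit rescaling; one small point to tighten is that in the converse direction the conservation constraint \eqref{conservation} is part of the LP's primal feasibility but \emph{not} of \eqref{eq:opt_obj}'s, so it does not ``transfer verbatim'' --- it must be recovered by applying your multiply-and-sum argument to the KKT system of \eqref{eq:opt_obj}, which is exactly what the paper does.
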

\begin{proof}
 Let $x^\dagger$ be an equilibrium. By Proposition \ref{prop:ne_mass_rate}, $x^\dagger$ is an optimal solution of \eqref{eq:ne_mass_rate}.
$x^\dagger$ remains optimal if the objective in \eqref{eq:ne_mass_rate} is replaced by
\begin{equation}
    m\log\bigg(\sum_{i}x_{ii}c_it_i\bigg).\label{replace}
\end{equation}
Thus, $x^\dagger$ satisfies the first order conditions:
    \begin{subnumcases}{}\label{eq:ne_foc1}
       \frac{mc_it_i}{\sum_i x^\dagger_{ii}c_it_i}-\lambda\bigg[t_i+\tau_i(x^\dagger_{ii})+w_i^d(x^\dagger_{ii})\bigg]+\sum_{j\not=i}(\nu_i-\nu_j)q_{ij}\le 0,\text{ with equality if }x^\dagger_{ii}>0,\ \forall i,\\
    \label{eq:ne_foc2}    -\lambda t_{ij}+\nu_{i}-\nu_j\le 0,\text{ with equality if } x^\dagger_{ij}>0,\ \forall i,j,i\not=j,
    \end{subnumcases}
where $\lambda\in \mathbb{R}$ is the Lagrange multiplier of the conservation law \eqref{conservation} and $\nu_i\in \mathbb{R}$ is the Lagrange multiplier of the $i$th balance equation \eqref{balance_ne}.

By multiplying both sides of \eqref{eq:ne_foc1} and \eqref{eq:ne_foc2} by the respective 
 $x^\dagger_{ij}$ for all $i$, $j$, then summing them up, we obtain:
\begin{align*}
   &\sum_{i}x^\dagger_{ii}\frac{mc_it_i}{\sum_i x^\dagger_{ii}c_it_i}-\lambda\sum_ix^\dagger_{ii}\bigg[t_i+\tau_i(x^\dagger_{ii})+w_i^d(x^\dagger_{ii})\bigg]-\lambda\sum_{j\not=i}x^\dagger_{ij}t_{ij}\nonumber\\
   &+\sum_ix^\dagger_{ii}\sum_{j\not=i}(\nu_i-\nu_j)q_{ij}+\sum_{j\not=i}(\nu_i-\nu_j)x^\dagger_{ij}=0\,.
\end{align*}
Applying the conservation law \eqref{conservation} and balance equation \eqref{balance_ne}, we obtain $m(1-\lambda)=0$, so $\lambda=1$. Therefore, the first order conditions can be rewritten as
\begin{subnumcases}{}
        \frac{mc_it_i}{\sum_ix^\dagger_{ii}c_it_i}-t_i-\tau_i(x^\dagger_{ii})-w_i^d(x^\dagger_{ii})+\sum_{j\not=i}(\nu_i-\nu_j)q_{ij}\le 0,\text{ with equality if }x^\dagger_{ii}>0,\ \forall i\label{kkt_ii}\,,\\
        -t_{ij}+\nu_{i}-\nu_j\le 0,\text{ with equality if }x^\dagger_{ij}>0,\ \forall i,j,i\not=j\,,\label{kkt_ij}
\end{subnumcases}
where $\nu_i$ is the Lagrange multiplier of the $i$th balance equation. The conditions  \eqref{kkt_ii} and \eqref{kkt_ij} are exactly the KKT conditions for the optimization problem \eqref{eq:opt_obj}.
\item Conversely, if $x^\dagger$ satisfies the KKT conditions of the optimization problem \eqref{eq:opt_obj}, i.e., \eqref{kkt_ii} and \eqref{kkt_ij}, it also satisfies the first order conditions \eqref{eq:ne_foc1} and \eqref{eq:ne_foc2} for the optimization problem \eqref{eq:ne_mass_rate} with $\lambda=1$ and objective function replaced by \eqref{replace} . 
By multiplying both sides of \eqref{kkt_ii} and \eqref{kkt_ij}  by the respective 
 $x^\dagger_{ij}$ for all $i$, $j$, then summing them up, we obtain the conservation law \eqref{conservation}. Since $x^\dagger$ also satisfies the balance equation \eqref{balance_ne}, $x^\dagger$ is feasible and hence optimal for the optimization problem \eqref{eq:ne_mass_rate}. Thus, $x^\dagger$ is an equilibrium.
\end{proof}

\begin{corollary}
    If Assumption \ref{assum:function_of_x} holds, a stationary equilibrium always exists for game $\mathcal{G}$.
\end{corollary}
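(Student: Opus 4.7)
The plan is to combine Theorem \ref{thm:ne_opt} with a direct existence argument for a KKT point of the optimization problem \eqref{eq:opt_obj}. Because Theorem \ref{thm:ne_opt} states that equilibrium mass rates are in one-to-one correspondence with KKT points of \eqref{eq:opt_obj}, it suffices to establish that this optimization problem has at least one KKT point, and the standard route is to exhibit an optimizer and then invoke a constraint qualification.

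The first substantive step is to show the optimization problem attains its supremum over the feasible set $\mathcal{F}=\{x\in\mathbb{R}_+^{N^2}:\text{\eqref{balance_ne} holds}\}$. Here I would argue that the objective in \eqref{eq:opt_obj} is coercive on $\mathcal{F}$: the linear penalty $-\sum_i x_{ii}t_i-\sum_{i\ne j}x_{ij}t_{ij}$, combined with the slow ($\log$) growth of $m\log(\sum_i x_{ii}c_it_i)$, drives the objective to $-\infty$ as $\|x\|\to\infty$; and Assumption \ref{assum:function_of_x}(2) ensures that $\int_0^{x_{ii}}w_i^d(s)\,\mathrm{d}s\to+\infty$ as $x_{ii}\uparrow b_i$, so the objective also diverges to $-\infty$ at the demand--supply boundary. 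Next, I would exhibit a feasible point with finite objective value: taking $x^0_{ii}$ to be a (scaled) stationary distribution of the passenger-routing matrix $q^T$ with $x^0_{ij}=0$ for $i\ne j$ produces an element of $\mathcal{F}$ with $0<x^0_{ii}<b_i$ for every relevant $i$, so both $m\log(\sum_i x^0_{ii}c_it_i)$ and each $\int_0^{x^0_{ii}}w_i^d(s)\,\mathrm{d}s$ are finite. Together with continuity of the objective on the effective domain $\mathcal{F}\cap\{x:x_{ii}<b_i\,\forall i\}$, the Weierstrass theorem then produces a maximizer $x^\dagger$.

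Finally, since the constraints in \eqref{eq:opt_obj} consist only of linear equalities \eqref{balance_ne} and nonnegativity, the linearity constraint qualification holds automatically at every feasible point. Hence $x^\dagger$ satisfies the KKT conditions \eqref{kkt_ii}--\eqref{kkt_ij}, and Theorem \ref{thm:ne_opt} then gives that $x^\dagger$ is an equilibrium mass rate. The corresponding mass distribution $\mu^\dagger$ is recovered from \eqref{eq:little_law}.

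The main obstacle I anticipate is the careful handling of the effective domain of the objective: the integrand $w_i^d(\cdot)$ is only assumed to be defined on $[0,b_i)$ and to diverge at $b_i$, so one must verify both that the feasible point $x^0$ sits strictly inside this domain and that the maximizer does too. Coercivity from Assumption \ref{assum:function_of_x}(2) is exactly what rules out boundary maximizers with $x^\dagger_{ii}=b_i$, while the existence of a feasible point with strictly positive trip supply rates in at least one region is what prevents the logarithmic term from being $-\infty$ at the candidate optimum.
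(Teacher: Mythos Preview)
Your overall plan matches the paper's: invoke Theorem \ref{thm:ne_opt}, show that \eqref{eq:opt_obj} has a maximizer via a coercivity/compactness argument, and then note that the linear constraints automatically satisfy a constraint qualification so that the maximizer is a KKT point. You are in fact more explicit than the paper about exhibiting an interior feasible point and about the constraint qualification.

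There is, however, a gap in your coercivity step at the demand--supply boundary. You assert that Assumption \ref{assum:function_of_x}(2) forces $\int_0^{x_{ii}} w_i^d(s)\,\dif s\to+\infty$ as $x_{ii}\uparrow b_i$, and you rely on this to conclude that the objective diverges to $-\infty$ there. But $w_i^d(s)\to\infty$ does \emph{not} imply divergence of its integral: for instance $w_i^d(s)=(b_i-s)^{-1/2}$ blows up at $b_i$ while $\int_0^{b_i}(b_i-s)^{-1/2}\,\dif s=2\sqrt{b_i}<\infty$. If the integral stays finite, your Weierstrass argument does not directly deliver an interior maximizer, since the effective domain $\{x_{ii}<b_i\}$ is not compact and the objective need not go to $-\infty$ at that boundary. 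The paper closes this gap differently: it observes that the partial derivative of the objective with respect to $x_{ii}$ equals
\[
\frac{mc_it_i}{\sum_j x_{jj}c_jt_j}-t_i-\tau_i(x_{ii})-w_i^d(x_{ii}),
\]
which tends to $-\infty$ as $x_{ii}\to b_i$ simply because $w_i^d$ does. Hence for $x_{ii}$ close enough to $b_i$ the objective strictly increases as one decreases $x_{ii}$, so any maximizer must have each $x_{ii}$ bounded strictly away from $b_i$, regardless of whether the integral diverges. Replacing your integral-divergence claim with this derivative argument repairs the proof.
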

\begin{proof}
By Theorem \ref{thm:ne_opt}, we only need to show that there exist KKT points. Because  optimal solutions to \eqref{eq:opt_obj} are KKT points, it suffices to show there exist such solutions. Our function is continuous and if maximized over a compact region,  it would have an optimal solution. But since our domain is not compact, we need to show that optimal values can be obtained in a compact subset of the function domain.
The current domain for variable $x$ in \eqref{eq:opt_obj}, $\mathbb{R}_+^{ N^2 }$, is open and unbounded. We need to show that the optimal value of \eqref{eq:opt_obj} can be obtained in a closed and bounded (compact) subset of $\mathbb{R}_+^{ N^2 }$. 

First, it should be noted that the optimal value of \eqref{eq:opt_obj} must be obtained away from  the zero vector, which is not a feasible solution. This is because as $||x||\rightarrow0$, the objective value tends towards negative infinity. Second,  as per Assumption  \ref{assum:function_of_x},  the mass rate $x_{ii}$ has a strict upper bound 
 $b_i$ for any $i$. Third, for any $i\not=j$, as $x_{ij}$ goes to infinity, the objective function in \eqref{eq:opt_obj} goes to negative infinity since $t_{ij}>0$. Consequently, the optimal  value of \eqref{eq:opt_obj} must be obtained when any $x_{ij}$ with $i\not=j$ is finite.
 Furthermore, the optimal value must be obtained when  $x_{ii}$ is strictly less than $ b_i$   for any $i$ since given part $(2)$ in Assumption \ref{assum:function_of_x}, the derivative of the objective function in \eqref{eq:opt_obj} with respect to $x_{ii}$ 
 \[ \frac{mc_it_i}{\sum_ix^\dagger_{ii}c_it_i}-t_i-\tau_i(x^\dagger_{ii})-w_i^d(x^\dagger_{ii})\]
goes to negative infinity as $x_{ii}$   approaches  $b_i$.
 Thus,  without loss of generality, the open domain $\mathbb{R}_+^{ N^2 }$ for $x$ can be replaced by a closed and bounded region in $\mathbb{R}_+^{ N^2 }$. Given the continuity of the objective function, the optimal problem \eqref{eq:opt_obj} must possess an optimal solution  in $\mathbb{R}_+^{ N^2 }$. 
\end{proof}

The optimization problem is convex when the (total) sojourn time $T_{ii}(x_{ii})$ increases with respect to $x_{ii}$ for every $i$.  Moreover, if this increases strictly  for every $i$, the optimization problem becomes strictly convex, leading to  uniqueness of the stationary equilibrium.

\begin{corollary}\label{coro:uniqueness}
Suppose i) Assumption \ref{assum:function_of_x} holds, and ii) the total sojourn time of a driver, $T_i(x_{ii})$,  is a strictly increasing function of  $x_{ii}$ for every $i\in S$. Then, a unique stationary equilibrium exists for the game $\mathcal{G}$.
\end{corollary}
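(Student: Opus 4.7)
The plan is to use Theorem~\ref{thm:ne_opt} to identify stationary equilibria with the KKT points of the optimization problem \eqref{eq:opt_obj}, and then argue uniqueness by exploiting the strengthened hypothesis. Under strict monotonicity of $T_{ii}$, the integrand $w_i^d(s)+\tau_i(s)$ is strictly increasing, so each penalty $\Psi_i(x_{ii})=\int_0^{x_{ii}}(w_i^d(s)+\tau_i(s))\dif s$ is strictly convex in $x_{ii}$. Combined with concavity of the log term $m\log(\sum_i x_{ii}c_it_i)$ and linearity of the remaining terms in \eqref{eq:opt_obj}, this makes the full objective concave on the polyhedral feasible region and strictly concave along each diagonal coordinate $x_{ii}$. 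Because the constraints are linear, every KKT point is a global maximizer, and Theorem~\ref{thm:ne_opt} then identifies the stationary equilibria of $\mathcal{G}$ with exactly the optima of \eqref{eq:opt_obj}.

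The next step is to show uniqueness of the diagonal entries. Suppose $x^\dagger$ and $x^\ddagger$ are two optima, and form $x^\lambda=\lambda x^\dagger+(1-\lambda)x^\ddagger$ for $\lambda\in(0,1)$. By concavity, $x^\lambda$ is feasible and also optimal. If $x_{ii}^\dagger\neq x_{ii}^\ddagger$ for some $i$, strict convexity of $\Psi_i$ yields $-\Psi_i(x_{ii}^\lambda)>\lambda(-\Psi_i(x_{ii}^\dagger))+(1-\lambda)(-\Psi_i(x_{ii}^\ddagger))$, while all the other terms of the objective are at least weakly concave; summing them yields a value strictly greater than $\lambda F(x^\dagger)+(1-\lambda)F(x^\ddagger)$, contradicting optimality. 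Hence $x_{ii}^\dagger=x_{ii}^\ddagger$ for every $i\in S$, which pins down the per-region throughputs, driver waiting times $w_i^d(x_{ii}^\dagger)$, pick-up times $\tau_i(x_{ii}^\dagger)$, and the realized reward.

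The main obstacle is uniqueness of the off-diagonal repositioning flows $x_{ij}^\dagger$, $i\neq j$, because the objective is linear in those coordinates and strict concavity offers no direct leverage there. Once the diagonals are fixed, the residual problem is a min-cost flow LP: minimize $\sum_{i\neq j}x_{ij}t_{ij}$ over the balance equations \eqref{eq:stationary} with now-known source terms and $x_{ij}\geq 0$. My plan to close this gap is to use the KKT conditions: with $(x_{ii}^\dagger)_i$ unique, the system \eqref{kkt_ii} determines the dual variables $\nu_i$ up to a common additive constant, and then \eqref{kkt_ij} confines positive off-diagonal flows to the subgraph of ``tight'' edges with $\nu_i-\nu_j=t_{ij}$. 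On this tight subgraph every circulation has zero cost, so the residual freedom between two optima is precisely a signed circulation on tight edges that preserves all balance constraints and non-negativity; ruling out such non-zero circulations, by appealing to positivity of all $t_{ij}$ together with the fact that the tight-edge subgraph cannot support cost-cancelling cycles, is the crux of the argument, and it delivers uniqueness of $x^\dagger$, and hence of $\mu^\dagger$ via \eqref{eq:little_law}.
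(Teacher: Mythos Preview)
Your approach is in the same spirit as the paper's---invoke Theorem~\ref{thm:ne_opt} and then argue uniqueness via concavity---but you go further. The paper simply asserts that strict monotonicity of each $T_{ii}$ makes \eqref{eq:opt_obj} ``strictly convex'' and stops there; you correctly observe that this is imprecise, since the objective depends only linearly on the off-diagonal repositioning rates $x_{ij}$, $i\ne j$. Your argument that the diagonal throughputs $(x_{ii}^\dagger)_{i\in S}$ are uniquely determined is sound and already sharper than what the paper offers.

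The remaining gap is in your last step. You claim that positivity of the $t_{ij}$ forces the tight-edge subgraph to be free of cost-cancelling cycles, and hence to support no nonzero circulation. Positivity does rule out \emph{directed} cycles among tight edges (since $\nu_i-\nu_j=t_{ij}>0$ along any tight edge, $\nu$ strictly decreases along them), so the tight-edge graph is a DAG. But a DAG can still carry a nonzero \emph{signed} circulation: take four regions with tight edges $1\to 2$, $2\to 4$, $1\to 3$, $3\to 4$, which is consistent with a single dual vector whenever $t_{12}+t_{24}=t_{13}+t_{34}$. If the fixed diagonals require one unit of net repositioning from region~$1$ to region~$4$, then routing it along $1\to 2\to 4$ or along $1\to 3\to 4$ yields two distinct optimal off-diagonal flows, hence two distinct KKT points of \eqref{eq:opt_obj} with identical throughputs. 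Thus, without an additional nondegeneracy hypothesis on the travel times $(t_{ij})$, the full vector $x^\dagger$ need not be unique; either such a hypothesis must be added, or the conclusion weakened to uniqueness of the throughput vector $(x_{ii}^\dagger)_{i\in S}$---which is, in fact, all that the paper's informal strict-convexity argument actually delivers as well.
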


On the other hand, if $T_i(x_{ii})$ is not an increasing function of $x_{ii}$, multiple equilibria can arise. In fact, under most dispatch algorithms, the pick-up time is not a monotonic function of $x_{ii}$. In the case of multiple equilibria, the one with the lowest $x_{ii}$ is not efficient since drivers may be dispatched to pick up customers located far away and many customer abandon the system due to longer waiting. In Section \ref{section:original_NN} we analyze the impact of driver dispatch algorithms, and propose a novel dispatch algorithm designed to ensure that $T_i(x_{ii})$ is an increasing function of $x_{ii}$. This guarantees a unique equilibrium and eliminates inefficient equilibria.


\section{Ride-Matching Algorithms and Analysis of Equilibria}\label{section:analysisofequilibria}

In this section, we present a comprehensive analysis of the platform's ride-matching algorithms and customer abandonment behavior, examining their influence on the equilibria and the resulting overall performance of the ride-hailing system.  We begin by introducing the customer abandonment behavior model in Section \ref{section:Abandonment}. Next, we present our design of the platform's matching algorithm in Section \ref{section:Matching}. Section \ref{subsection:NN} derives formulas for driver waiting and pick-up times as functions of the driver supply rate, a prerequisite (Assumption \ref{assum:function_of_x}) for applying Theorem \ref{thm:ne_opt} to identify equilibria as KKT points of an optimization problem. Building on these results,  Section \ref{section:original_NN} demonstrates that the widely adopted nearest-neighbor dispatch approach--whether implemented without a matching radius or with an arbitrarily chosen one--can lead to multiple equilibria, some of which are characterized by lengthy pick-up times. This finding emphasizes the critical need to optimize the design of the matching radius, a topic we delve into in the following section.

\begin{table}\footnotesize
    \centering 
   \caption{Notations and Units}
    \begin{tabular}{c|c|c}
    \thickhline 
 \makecell{Notation} & Physical Meaning & Suggested Unit\\
    \thickhline 
   $m$ & Overall mass density of the drivers (vehicles)  & $\text{veh}/\text{km}^2$ \\
   \hline
   $\mu_{ik}$ & Mass density of the drivers in region $i$ taking action $k$  & $\text{veh}/\text{km}^2$\\
   \hline
   $t_{ij}$ & Expected travel time from region $i$ to $j$ & min\\
   \hline
   $x_{ik}$ & \makecell{Rate of the drivers arrive in region $i$ 
   and take action $k$ per unit area }& $\text{veh}/( \text{min}\cdot \text{km}^2)$\\
    \hline
   $c_i$ & \makecell{Money paid to the driver per unit time when carrying a passenger} & $\$/\text{min}$\\
    \hline
   $b_i$ & \makecell{Mass of customers (passengers) arrive in region $i$ per unit time per unit area} & $\text{pass}/( \text{min}\cdot \text{km}^2)$\\
   \hline
   $\theta_i$ & Customer abandonment rate in region $i$ & $\text{min}^{-1}$\\
   \hline
   $R_i$ & Matching radius in region $i$ & km\\
   \hline
   $w_i^d$ & Driver's expected waiting time in region  $i$ & min \\
   \hline
     $w_i^c$ & Customer's expected waiting time in region  $i$ & min \\
   \hline
    $\tau_i$ & Expected pick-up time in region  $i$ & min \\
   \hline
    $v_i$ & Average speed of the vehicles in region  $i$ &  $\text{km}/\text{min}$  \\
    \hline
   $\mu_{i}^d$ & Mass density of the waiting drivers in region $i$  & $\text{veh}/\text{km}^2$\\
   \hline
   $\mu_{i}^c$ & Mass density of the waiting customers in region $i$ & $\text{pass}/\text{km}^2$\\
     \thickhline 
    \end{tabular}\label{table:units}
\end{table}

\textbf{Remark on Region Size and Units.} 
In Section \ref{sec:model}, we did not specify the size of each physical region because the methodology and analysis presented there are independent of the specific dimensions of individual regions. However, the size of each region becomes relevant when we introduce the matching radius in this section. Performance metrics of the ride-hailing system—such as waiting times and pick-up times—are sensitive to both the size of regions and the mass per unit area within them.
To reflect this, we now redefine the mass and rate introduced in Section \ref{sec:model} as mass per unit area (or mass density) and rate per unit area, respectively. For brevity, we may occasionally omit the term "per unit area" when the context is clear.
For clarity, Table \ref{table:units} summarizes the physical meanings and suggested units for the primary variables and parameters used in this paper, including those introduced later. These units will be employed in the numerical experiments and simulations conducted throughout the study.


\subsection{Customer: Abandonment Behavior}\label{section:Abandonment}
We introduce the the customer abandonment behavior model in this section. 
Ride-hailing platforms typically permit customers to cancel the service if they consider their waiting time to be excessively long. Our model, similar to \cite{Yang2020,wang2022}, assumes that each customer has a (random) maximum willing waiting time. If a customer is not matched with a driver within this time limit, they will exit the system. Additionally, we assume that once a match is made, customers are committed to the ride and cannot abandon the service.

In our equilibrium analysis we assume that each waiting customer in region $i$ has an exponential alarm clock with parameter $\theta_i$, that if it expires prior to the customer being matched with a driver, this customer abandons the system.
Thus, $\theta_i$ represents the impatience level of customers in region $i$: a higher value of $\theta_i$ indicates greater impatience among customers. 
In the spirit of mean-field approximation,  if the mass density of waiting customers is $\mu_i^c$, these will generate a rate of abandonment $\theta_i\mu_i^c$, where
 \begin{equation}
    \label{eq:abandon}
    \theta_i\mu_i^c+x_{ii}=b_i\,,
\end{equation}
   i.e.,  customer abandonment rate + service rate (driver supply rate) = customer arrival rate in an equilibirium state.
This equation establishes a relationship between the unknown variables  $x_{ii}$, $\mu_i^c$ and known parameters $b_i$, $\theta_i$ using mean-field approximation.


\subsection{Platform: Matching Policy}\label{section:Matching}

\begin{figure*}
    \centering
    \includegraphics[width=4in]{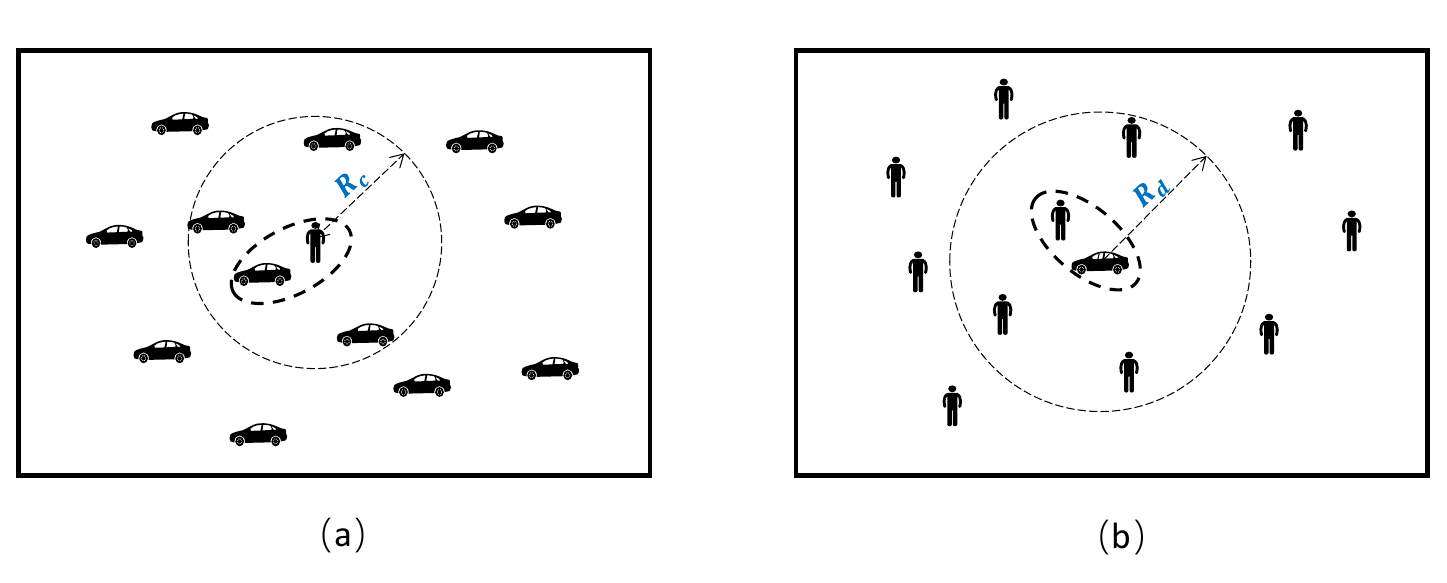}
    \caption{Two-matching-radius Nearest-neighbor Dispatch: (a). an arriving customer is immediately matched with the closest idle driver within matching radius $R_c$ and (b). an arrival driver is immediately matched with the closest waiting customer within matching radius $R_d$. }
    \label{fig_NN}
\end{figure*}

We present our design for  the platform's matching algorithm in this section. We assume the platform adopts `nearest-neighbor dispatch' (NN), which is simple and near-optimal \cite{Besbes2022}.   Recent work proposes using a matching radius to prevent drivers from being dispatched to pick up customers who are too far away, thereby improving the system performance \cite{castillo2023,Yang2020}.
Our observation is that only one matching radius lacks flexibility. A key innovation of this study is the introduction of a \textbf{two-matching-radius nearest-neighbor dispatch} algorithm:
\begin{itemize}
    \item When a new customer arrives (as in Fig. \ref{fig_NN}.(a) ), the platform assigns him to the closest available driver within the first matching radius $R_c$.
    \item When a driver becomes available (as in Fig. \ref{fig_NN}.(b) ), the platform assigns him to the closest waiting customer within a second matching radius $R_d$, which may differ from the first matching radius.
\end{itemize}
Drivers who are not immediately matched upon arrival will wait until they are paired with incoming customers. Similarly, customers who are not instantly matched upon arrival will either wait to be paired with incoming drivers or abandon the service if they become impatient.

\subsection{
Formulas for Waiting Times and Pick-up Times}\label{subsection:NN}

In this section we establish formulas for the average waiting time and pick-up time as functions of the driver supply rate, as required (Assumption \ref{assum:function_of_x}) for applying Theorem 2.3 to identify equilibria through an optimization problem, based on the customer abandonment behavior defined in Section \ref{section:Abandonment} and the platform’s dispatching algorithm defined in Section \ref{section:Matching}.

 Let  each region $i$'s area be denoted by $a_i$.  
We introduce two matching radii in region $i$, $R_{i,c},R_{i,d}\in(0,\sqrt{a_i/\pi})$, where  $\sqrt{a_i/\pi}$ is the approximate maximum radius of region $i$.   
 We denote the average vehicle speed by $v_i$. Let $\mu_i^c$ and $\mu_i^d$ denote the average densities of waiting customers and waiting drivers respectively across the region $i$. Recall that $x_{ii}$ denotes the rate of drivers that  become available and choose to serve region $i$ per unit area, and $b_i$, $\theta_i$ the arrival rate and abandonment rate of customers per unit area, respectively.
 
We assume the locations of waiting customers and waiting drivers are drawn at any moment from independent uniform distributions across region \(i\). This assumption is justified by the randomness of customers' origins and destinations, as well as the mobility of the vehicles. We can now derive the formula for the driver's average waiting time. A critical observation is that the rate at which drivers enter the waiting state is equal to the rate at which they leave it. The entry rate is exactly the rate at which drivers become idle after failing to be matched with customers, while the exit rate is the rate at which the waiting drivers are matched with arriving customers. Using this key insight, we first establish a relationship between  $\mu_i^c$ and $\mu_i^d$:
\begin{equation}
    x_{ii}\left(1-\frac{\pi R_{i,c}^2}{a_i}\right)^{\mu_i^ca_i}=b_i\left(1-\left(1-\frac{\pi R_{i,d}^2}{a_i}\right)^{\mu_i^da_i}\right)\,,\label{eq:steady_state}
\end{equation}
where the left-hand side represents the approximate average rate at which drivers enter the waiting state, while the right-hand side represents the approximate average rate at which the waiting drivers are matched with arriving customers.
With the relations \eqref{eq:abandon} and  \eqref{eq:steady_state}, as well as Little's law, $\mu_i^d=x_{ii}w_i^d$, we are able to derive the  formulas for driver's average 
 waiting time and pick-up time, whose detailed derivation are provided in Appendix \ref{appendix:proof_formulas}.
    \begin{proposition}\label{prop:formulas}
    Assume the platform adopts the two-matching-radius nearest-neighbor dispatch with radii $R_{i,c}$ and $R_{i,d}$ in each region  $i\in S$.  Then, the driver's mean waiting time and mean pickup time can be expressed as explicit functions of the driver supply rate $x_{ii}$ as follows,
       \begin{align}
       w_i^d(x_{ii})&=\frac{\log(1-\frac{x_{ii}}{b_i}(1-\frac{\pi R_{i,c}^2}{a_i})^{\frac{b_i-x_{ii}}{\theta_i}a_i})}{x_{ii}a_i\log(1-\frac{\pi R_{i,d}^2}{a_i})}\,,\label{eq:waiting_time}\\
    \tau_i(x_{ii})&=\frac{1}{v_i}\int_0^{R_{i,c}}(1-\frac{\pi r^2}{a_i})^{\mu_i^ca_i}-(1-\frac{\pi R_{i,c}^2}{a_i})^{\mu_i^ca_i}\dif r+\frac{b_i}{v_ix_{ii}}\int_0^{R_{i,d}}(1-\frac{\pi r^2}{a_i})^{\mu_i^da_i}-(1-\frac{\pi R_{i,d}^2}{a_i})^{\mu_i^da_i}\dif r.
    \label{eq:pick-up}
\end{align}
    \end{proposition}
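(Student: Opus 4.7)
The plan is to derive $w_i^d$ first by combining the two steady-state balance equations with Little's law, and then to obtain $\tau_i$ through a pickup-distance flow-balance argument. The key probabilistic ingredient in both steps is the mean-field/uniform-spatial-Poisson approximation: for a test point chosen uniformly in a region of area $a_i$ with $\mu a_i$ other points placed independently and uniformly, the probability that none of them lies within radius $r$ of the test point is $(1-\pi r^2/a_i)^{\mu a_i}$. This supplies both the matching probabilities appearing in \eqref{eq:steady_state} and the tail distribution of the nearest-neighbor distance needed for the pickup-time calculation.

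For the waiting-time formula, I would first use \eqref{eq:abandon} to eliminate $\mu_i^c$, obtaining $\mu_i^c=(b_i-x_{ii})/\theta_i$. Substituting this into \eqref{eq:steady_state} and isolating the driver-side factor yields
\[
\left(1-\frac{\pi R_{i,d}^2}{a_i}\right)^{\mu_i^d a_i}=1-\frac{x_{ii}}{b_i}\left(1-\frac{\pi R_{i,c}^2}{a_i}\right)^{(b_i-x_{ii})a_i/\theta_i}.
\]
Taking logarithms to solve for $\mu_i^d$ and then applying Little's law $w_i^d=\mu_i^d/x_{ii}$ reproduces \eqref{eq:waiting_time} directly.

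For the pickup-time formula, I would use a flow-balance argument over pickup distances. Matches arise from two disjoint events per unit area per unit time: (i) a driver arriving at rate $x_{ii}$ and finding the nearest waiting customer within $R_{i,c}$ (the relevant density is $\mu_i^c$), and (ii) a customer arriving at rate $b_i$ and finding the nearest waiting driver within $R_{i,d}$ (the relevant density is $\mu_i^d$). Combining the nearest-neighbor tail above with the identity $E[X\,\mathbf{1}\{X\le M\}]=\int_0^M[P(X>r)-P(X>M)]\,\dif r$ for nonnegative $X$, the expected matching-distance contribution per event is precisely the corresponding integrand in \eqref{eq:pick-up}. Since drivers never abandon, every driver entering action $(i,i)$ is matched exactly once, so the aggregate match rate per unit area equals $x_{ii}$ (this also follows by summing the two event rates and applying \eqref{eq:steady_state}). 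Dividing the total pickup-distance flow by $x_{ii}$ and by $v_i$ gives \eqref{eq:pick-up}.

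The main subtlety will be justifying the mean-field replacement: the formulas treat the instantaneous numbers of waiting customers and drivers as deterministic quantities $\mu_i^c a_i$ and $\mu_i^d a_i$, and their positions as independent uniform draws, so that matching indicators and nearest-neighbor distances behave as if independently sampled from the spatial Poisson-like law above. Once this approximation is accepted, as stated in the paragraph preceding \eqref{eq:steady_state}, the remainder reduces to algebraic manipulation of \eqref{eq:abandon}--\eqref{eq:steady_state} together with one application of the tail-integral identity for $E[X\,\mathbf{1}\{X\le M\}]$.
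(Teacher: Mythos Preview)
Your proposal is correct and follows essentially the same approach as the paper: the waiting-time derivation (abandonment balance $\to$ steady-state relation $\to$ Little's law) is identical, and the pickup-time derivation splits matches into the same two events (driver-initiated via $R_{i,c}$, customer-initiated via $R_{i,d}$) with the same nearest-neighbor tail $(1-\pi r^2/a_i)^{\mu a_i}$. The only cosmetic difference is that the paper frames the pickup computation as a driver-centric conditional expectation (Case 1 vs.\ Case 2) and then invokes \eqref{eq:steady_state} to convert the Case-2 weight $G_c(R_{i,c})$ into $\tfrac{b_i}{x_{ii}}(1-G_d(R_{i,d}))$, whereas your flow-balance formulation produces the $b_i/x_{ii}$ factor directly; both routes arrive at \eqref{eq:pick-up} after the same integration-by-parts/tail-integral identity.
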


Proposition \ref{prop:formulas} validates Assumption \ref{assum:function_of_x}. More specifically, Proposition \ref{prop:formulas} states that the driver's mean waiting time and mean pickup time can indeed be expressed as explicit functions of the driver supply rate as required by Assumption \ref{assum:function_of_x}.(1)\,. Furthermore, one can check that as $x_{ii}\rightarrow b_i$, $w_i^d(x_{ii})$ in \eqref{eq:waiting_time} goes to $\infty$, which satisfies Assumption \ref{assum:function_of_x}.(2)\,.  As a result, by Theorem \ref{thm:ne_opt}, the equilibria of game $\mathcal{G}$ under our two-matching-radius nearest-neighbor dispatch algorithm  are exactly the KKT points of the optimization problem \eqref{eq:opt_obj} with $\tau_i$ and $w_i^d$ given by \eqref{eq:waiting_time} and \eqref{eq:pick-up} respectively.

To approximate the formulas \eqref{eq:waiting_time} and \eqref{eq:pick-up}, we apply the following \emph{change of variable}, 
\begin{equation}\label{eq:change_of_variable}
    -\frac{a_i}{\pi}\log
(1-\frac{\pi R_{i,\cdot}^2}{a_i})\mapsto R_{i,\cdot}^2.
\end{equation}
 where with a slight abuse of notation  we keep the same variable for the new radius. In fact, the left-hand side of \eqref{eq:change_of_variable} is approximately $R_{i,\cdot}^2$ when $R_{i,\cdot}<<\sqrt{a_i}$ as $\log(1-y)\approx-y$ when $y$ approaches 0.  Then,
 the drivers' waiting times and pick-up times can be approximated as:
\begin{equation}\label{eq:approximate}
    \begin{cases}
    w_i^d(x_{ii})&=\frac{-\log(1-\frac{x_{ii}}{b_i}e^{-\frac{b_i-x_{ii}}{\theta_i}\pi R_{i,c}^2})}{x_{ii}\pi R_{i,d}^2}\,,\\
    \tau_i(x_{ii})&=\frac{1}{v_i}\int_0^{R_{i,c}}e^{-\mu_i^c(x_{ii})\pi r^2}-e^{-\mu_i^c(x_{ii})\pi R_{i,c}^2}\dif r+\frac{b_i}{v_ix_{ii}}\int_0^{R_{i,d}}e^{-\mu_i^d(x_{ii})\pi r^2}-e^{-\mu_i^d(x_{ii})\pi R_{i,d}^2}\dif r\,.
\end{cases}
\end{equation}
With these formulas, we are able to analyze the equilibria of the game in the next section.

\subsection{Equilibria Analysis} 
\label{section:original_NN}
In this section, we show that the original NN dispatch algorithm, where the matching radius $R_{i,\cdot}$ is set to the maximal value $\sqrt{a_i/\pi}$, i.e., the approximated radius of region $i$, (or  $\infty$  if the approximated formulas \eqref{eq:approximate} are used), can  lead to multiple equilibria even when there is only one region. Furthermore, we examine the influence of the matching radius in the one-region case and discuss how these findings could be extended to multi-region scenarios.

We first  analyze the equilibrium when  there is only one region. To simplify notation, we omit the subscript $i$. As there is no driver repositioning, we use $x$ to denote the driver supply rate.
When the two matching radii are $\sqrt{a/\pi}$, the equation \eqref{eq:steady_state} becomes,
\[ x\cdot0^{\mu^c}+b\cdot0^{\mu^d}=b,\]
which seems to be meaningless. But if we define the indeterminate limit\footnote{Note that $\lim\limits_{x\rightarrow0}\lim\limits_{y\rightarrow0}x^y=1$  and $\lim\limits_{y\rightarrow0}\lim\limits_{x\rightarrow0}x^y=0$.} $0^0$ to be $1$, we obtain two possible solutions for $\mu^c,\mu^d$,
\[\mu^c=0,\mu^d>0\text{ or }\mu^c>0,\mu^d=0.\]
These solutions are validated through an intuitive argument: it is impossible for both the number of waiting customers and the number of waiting drivers to be positive when the matching radius is infinite, as they would be matched with each other. Consequently, there are two distinct types  of equilibria:
\begin{itemize}
    \item \emph{Oversupply Equilibrium}:  There is an oversupply of drivers, who must wait to be matched with customers. This results in a positive average waiting time for drivers.
    \item \emph{Undersupply Equilibrium}: There is an undersupply of drivers, leading to customers waiting for available drivers. This results in a positive average waiting time for customers.
\end{itemize}

We next analyze the two types of equilibria. As either $\mu^c=0,\mu^d>0\text{ or }\mu^c>0,\mu^d=0$, we assume the following form of square root law to estimate the average pick-up time,
\begin{equation}\label{eq:sq_root}
    \tau=\begin{cases}
\frac{1}{2v\sqrt{\mu^c}}&\text{ if }\mu^d=0,\\
 \frac{1}{2v\sqrt{\mu^d}}&\text{ if }\mu^c=0.
\end{cases}
\end{equation}
Similar forms of square root law are also used in  related literature \cite{Besbes2022,wang2022}. In Appendix \ref{appendix:pickup}, we demonstrate that  the formulas in \eqref{eq:pick-up} and \eqref{eq:approximate} both lead to \eqref{eq:sq_root} when  $R_{i,\cdot}\rightarrow\sqrt{a_i/\pi}$ and   $R_{i,\cdot}\rightarrow\infty$ respectively  through an asymptotical analysis. In the following discussion, we use  $x^o$ and $x^u$ to denote the equilibrium driver supply rates for oversupply type and
undersupply type respectively.

\textbf{Type 1: Oversupply Equilibrium.} In this type of equilibrium, there is an oversupply of drivers, resulting in a positive density of waiting drivers, i.e., $\mu^c = 0$ and $\mu^d > 0$. The driver supply rate in this equilibrium is always equal to $b$, i.e., $x^o = b$, as every customer is served in the oversupply equilibrium.
 The only unknown variable for this type of equilibrium is the density of waiting drivers, $\mu^d$. To find the unknown $\mu^d$, we
apply the pick-up time formula in  \eqref{eq:sq_root} and solve the  conservation law of mass equation as follows,
\begin{equation}\label{eq:ud}
   \mu^d + \frac{b}{2v\sqrt{\mu^d}} + bt = m\,,
\end{equation}
which has one  solution when
\(
m = \tfrac{1}{3} \left( \tfrac{b^2}{16v^2} \right)^{\frac{1}{3}} + bt,
\)
and two  solutions when
\(
m > \tfrac{1}{3} \left( \tfrac{b^2}{16v^2} \right)^{\frac{1}{3}} + bt.
\)

\textbf{Type 2: Undersupply Equilibrium.} In this type of equilibrium, there is an undersupply of drivers, resulting in a positive density of waiting customers, i.e., $\mu^c>0$ and $\mu^d=0$. 
According to Little's law, $w^d = \mu^d / x = 0$, as there are no waiting drivers. In this case, the unknown variable is the driver supply rate, $x$.  To find the unknown $x$, we
use the pick-up time formula in \eqref{eq:sq_root} and solve the  conservation law of mass equation \(\tfrac{x}{2v\sqrt{\mu^c}} + xt=m\). By substituting \eqref{eq:abandon}, the equation becomes
\begin{equation}\label{eq:xc}
   \frac{x}{2v}\sqrt{\frac{\theta}{b-x}} + xt = m,
\end{equation}
which has exactly one solution as long as $m > 0$. 
The equilibrium driver supply rate $x^u$ satisfies equation \eqref{eq:xc} and $x^u < b$, representing an inefficient equilibrium compared to the oversupply one. 

We summarize all the possible equilibria results   as follows. 

\begin{proposition}\label{prop:multiple_equilibria}
    Assume there is a single region and the platform adopts the NN dispatch algorithm with no restriction on the matching radius. There can be multiple equilibria depending on the size of driver density, $m$:
    \begin{itemize}
        \item[1)] When $0<m<\tfrac{1}{3}\left(\tfrac{b^2}{16v^2}\right)^{\frac{1}{3}}+bt$, there is only one  equilibrium which is an undersupply one with driver supply rate $x^c$ the unique solution to \eqref{eq:xc} and zero waiting driver density.
        \item[2)] When $m=\tfrac{1}{3}\left(\tfrac{b^2}{16v^2}\right)^{\frac{1}{3}}+bt$, there are two equilibria: one oversupply equilibrium with driver supply rate $x^o=b$ and waiting driver density $\mu^d$ the unique solution to \eqref{eq:ud}, and one undersupply equilibrium with driver supply rate $x^u$ the unique solution to \eqref{eq:xc}, and zero waiting driver density .
           \item[3)] When $m>\tfrac{1}{3}\left(\tfrac{b^2}{16v^2}\right)^{\frac{1}{3}}+bt$, there are three equilibria: two oversupply equilibrium with the same driver supply rate $x^o=b$ and two different waiting driver densities given by solutions to \eqref{eq:ud} respectively, and one undersupply equilibrium with driver supply rate $x^u$ the unique solution to \eqref{eq:xc} and zero waiting driver density.
    \end{itemize}
\end{proposition}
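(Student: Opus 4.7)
The plan is to count, for each range of $m$, the solutions of the two scalar equations \eqref{eq:ud} and \eqref{eq:xc}. The discussion preceding the proposition establishes that, with the unrestricted matching radius, any steady state must satisfy either $\mu^c=0,\mu^d>0$ (oversupply, which forces $x^o=b$) or $\mu^c>0,\mu^d=0$ (undersupply, with $x^u<b$). These two cases are mutually exclusive, so the total number of equilibria is the sum of the solution counts of the two equations.

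For the undersupply equation \eqref{eq:xc}, set $g(x)=\tfrac{x}{2v}\sqrt{\theta/(b-x)}+xt$ on $[0,b)$. Both summands are strictly increasing in $x$ on $(0,b)$, so $g'(x)>0$ there; combined with $g(0)=0$ and $g(x)\to\infty$ as $x\uparrow b$, the intermediate value theorem yields exactly one $x^u\in(0,b)$ solving $g(x^u)=m$ for every $m>0$. This accounts for the undersupply equilibrium in all three parts of the proposition.

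For the oversupply equation \eqref{eq:ud}, set $f(\mu)=\mu+\tfrac{b}{2v\sqrt{\mu}}+bt$ on $(0,\infty)$. Then $f'(\mu)=1-\tfrac{b}{4v}\mu^{-3/2}$ vanishes uniquely at $\mu_*=(b/(4v))^{2/3}$, and $f''(\mu)=\tfrac{3b}{8v}\mu^{-5/2}>0$, so $f$ is strictly convex with a unique global minimum $m_*:=f(\mu_*)$. A direct substitution, most cleanly organized by letting $A=(b/(4v))^{1/3}$ so that $\mu_*=A^2$ and $\tfrac{b}{2v A}=2A^2$, evaluates $f(\mu_*)$ to the explicit threshold stated in the proposition. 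Since $f(\mu)\to\infty$ both as $\mu\downarrow0$ and as $\mu\uparrow\infty$, strict convexity implies that $f(\mu)=m$ has $0$, $1$, or $2$ solutions $\mu^d$ according as $m<m_*$, $m=m_*$, or $m>m_*$.

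Adding the counts from the two disjoint families then gives the trichotomy of $1$, $2$, or $3$ equilibria stated in parts (1)--(3). All of the arguments are elementary calculus; the only step with any real content, already carried out in the paragraphs preceding the proposition, is the resolution of the indeterminate $0^0$ that forces exactly one of $\mu^c$ and $\mu^d$ to vanish at equilibrium, so that the two scalar equations indeed cover all possible equilibria without overlap.
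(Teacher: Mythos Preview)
Your argument is correct and is essentially a more detailed rendering of the paper's own analysis, which appears in the paragraphs immediately preceding the proposition rather than as a separate proof: split into the two disjoint regimes, use monotonicity of the left-hand side of \eqref{eq:xc} and strict convexity of the left-hand side of \eqref{eq:ud}, and add the solution counts.

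One arithmetic caution: if you actually carry out the substitution you describe, with $A=(b/(4v))^{1/3}$ so that $\mu_*=A^2$ and $b/(2vA)=2A^2$, you obtain $f(\mu_*)=A^2+2A^2+bt=3\bigl(b^2/(16v^2)\bigr)^{1/3}+bt$, not $\tfrac{1}{3}\bigl(b^2/(16v^2)\bigr)^{1/3}+bt$. The constant $\tfrac{1}{3}$ printed in the proposition appears to be a typo for $3$; your method correctly locates the minimum, but the asserted agreement with the stated constant does not hold as written.
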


Proposition \ref{prop:multiple_equilibria} demonstrates that there can be multiple equilibria under the NN dispatch algorithm with no restriction on the matching radius even in a single region.  In case 2) and 3) of Proposition \ref{prop:multiple_equilibria},  oversupply and   undersupply equilibria coexist. The undersupply equilibria are  inefficient ones  (sometimes referred to as WGC equilibria): comparing \eqref{eq:ud} and \eqref{eq:xc}, the pickup time is longer in the undersupply equilibrium because a greater proportion of the mass is engaged in picking up customers (represented by the second term on the left-hand side of \eqref{eq:xc}).
In such inefficient equilibria, drivers are assigned to pick up customers located far away and the system experiences a driver shortage because: 1)
high average pick-up times force drivers to travel long distances to reach customers; 2)
wasted time en route further reduces driver availability, exacerbating the undersupply.
This creates a vicious cycle: longer travel times lead to fewer drivers being available to respond to new requests, worsening the imbalance between supply and demand. In the following, we show that theoretically the efficiency loss can be infinitely large.

\textbf{Infinity Price of Anarchy}. The undersupply equilibrium can be arbitrarily worse compared to the oversupply equilibrium, i.e.,  $x^u<<b$. Specifically, when the mass is $m=\tfrac{1}{3}\left(\tfrac{b^2}{16v^2}\right)^{\frac{1}{3}}+bt$, there are exactly two equilibria.  
The two equilibria are $x^o=b$ and $x^u$, where $\frac{x^u}{b}$  satisfies  
\[ \frac{x^u}{b}=\frac{\frac{1}{3}\Big(\frac{b^2}{16v^2}\Big)^{\frac{1}{3}}+bt}{\frac{b}{2v}\sqrt{\frac{\theta}{b-x^u}} + bt},\]
which approaches $0$ if $\theta\rightarrow\infty$. We summarize the results in the following corollary.
 \begin{corollary}\label{coro:poa}
     When $m=\tfrac{1}{3}\left(\tfrac{b^2}{16v^2}\right)^{\frac{1}{3}}+bt$ (case 2) of Proposition \ref{prop:multiple_equilibria}), the ratio between the two equilibrium driver supply  rates, $x^u$ and $x^o$, tends to 0 as the waiting customers' abandonment rate,  $\theta$, goes to infinity, i.e.,
     \[\lim_{\theta\rightarrow\infty}\frac{x^u}{x^o}=0.\]
 \end{corollary}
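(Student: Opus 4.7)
The plan is to reduce the corollary to showing $x^u(\theta)\to 0$ as $\theta\to\infty$. Since $x^o=b$ is independent of $\theta$ (Proposition~\ref{prop:multiple_equilibria}, case 2), the ratio equals $x^u/b$, so it suffices to prove $x^u(\theta)\to 0$. I would attack this by extracting from \eqref{eq:xc} an explicit, $\theta$-dependent upper bound on $x^u$ rather than reasoning indirectly via the ratio expression displayed right before the corollary.

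Concretely, I would first recall that for every $\theta>0$ the undersupply equilibrium satisfies $0 < x^u < b$ and is the unique solution to
\[
\frac{x^u}{2v}\sqrt{\frac{\theta}{b-x^u}} + x^u t = m.
\]
Both terms on the left are non-negative, so dropping the drift term $x^u t$ yields $\frac{x^u}{2v}\sqrt{\theta/(b-x^u)} \le m$. Since $x^u\ge 0$, we have $b-x^u \le b$, hence $\sqrt{\theta/(b-x^u)} \ge \sqrt{\theta/b}$. Combining the two inequalities gives the explicit decay estimate
\[
x^u \;\le\; 2\, v\, m\, \sqrt{b/\theta},
\]
whose right-hand side tends to $0$ as $\theta\to\infty$. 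Dividing by $x^o=b$ yields $x^u/x^o\to 0$, which is exactly the claim.

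The argument is a one-line bound once \eqref{eq:xc} is in hand, so I do not foresee any real obstacle. The only ingredient that must be imported is the existence and uniqueness of $x^u\in(0,b)$ for every $\theta>0$ at the prescribed mass $m=\tfrac13(\tfrac{b^2}{16v^2})^{1/3}+bt$, which is already granted by Proposition~\ref{prop:multiple_equilibria}. Note in particular that the specific cubic-root form of $m$ plays no role in the limit: any fixed $m>0$ would give the same vanishing rate $O(1/\sqrt{\theta})$, so the rate of divergence of the sojourn time as $\theta\to\infty$ is the genuine driver of the phenomenon.
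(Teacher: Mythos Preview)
Your argument is correct and is essentially the same as the paper's: the paper rewrites \eqref{eq:xc} as $\frac{x^u}{b}=\dfrac{m}{\frac{b}{2v}\sqrt{\theta/(b-x^u)}+bt}$ and observes that the denominator diverges as $\theta\to\infty$, which is implicitly the same bound $\sqrt{\theta/(b-x^u)}\ge\sqrt{\theta/b}$ you make explicit. Your version has the minor advantage of producing the quantitative rate $x^u=O(\theta^{-1/2})$ and of noting that the particular value of $m$ is irrelevant to the limit.
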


Therefore, the inefficient equilibrium ($x^u$) can be arbitrarily worse than the efficient equilibrium ($x^o$). 
Note that an infinity price of anarchy  occurs under the following conditions: 1)   the customers exhibit extreme impatience ($\t\rightarrow\infty$); 2) the region size is sufficiently large ($a\rightarrow\infty$ in order to apply the square root law \eqref{eq:sq_root} in this extreme case). This extreme scenario is primarily theoretical.  However, it highlights a critical insight: if the dispatch algorithm is not properly designed, the system’s efficiency loss can be catastrophic.

\begin{table}[!ht]  
\small
    \centering 
   \caption{Values of Parameters}
    \begin{tabular}{c|c|c|c|c|c|c|c}
    \thickhline 
    Parameters & $a$& $b$&$v$&$\frac{1}{\t}$&$t$&$m$&$R$\\ 
     \thickhline 
     Units & $\text{ km}^2$&  $\text{ pass}/(\text{min}\cdot\text{km}^2)$ & $\text{ km/min}$ & $\text{ min}$& $\text{ min}$ & $\text{ veh}/\text{km}^2$ &$\text{ km}$ \\ 
    \thickhline 
 Values &$100$&  $2$ & $0.5$ & $1$& $10$ & $(0,\infty)$ & $\{1,2,3,4\}$\\
     \thickhline 
    \end{tabular}\label{tab:par}
\end{table}

\begin{figure*}[htbp]
    \centering
    \includegraphics[width=1\linewidth]{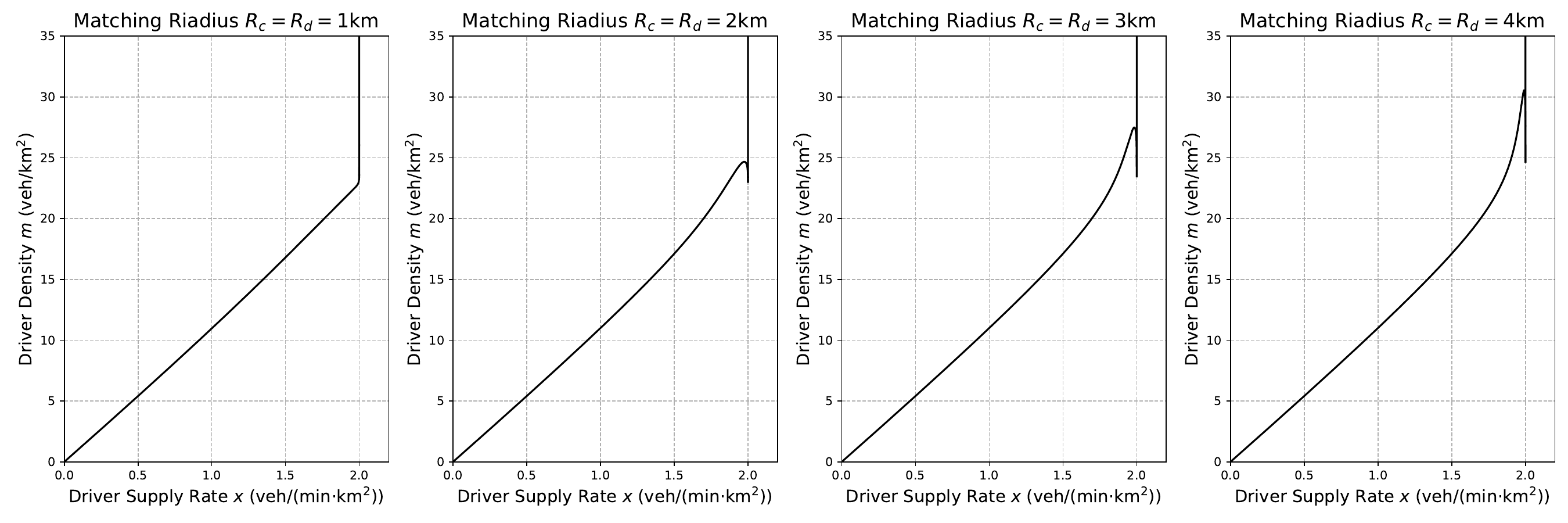}
    \caption{ The LHS of \eqref{eq:ne_oneregion} as a function of the driver supply rate $x$. We set the matching radius $R_c$ and $R_d$ to be the same.  In this figure, the matching radius values, from left to right, are 1 km, 2 km, 3 km, and 4 km. We can observe that for larger matching radii ($R=2,3,4$ km) and a driver density near 25, there can be up to three equilibria.  Additionally, the efficiency loss in the inefficient equilibrium is approximately $5\%-10\%$.} \label{fig:four_radii}
\end{figure*}

\textbf{Impact of Matching Radius.} In the following we investigate how different matching radii affect the outcome of equilibria and reveal the importance of the choice of matching radii. 
In the one region case,
we can compute the equilibria by solving the equation of conservation law of mass in \eqref{eq:conservation}, i.e.,
\begin{equation}\label{eq:ne_oneregion}
    x\big(w^d(x)+\tau(x)+t\big)=m,
\end{equation}
where the driver supply rate $x$ is the unknown and formulas of $w^d(x)$ and $\tau(x)$ are given by \eqref{eq:waiting_time} and \eqref{eq:pick-up}. Since $\tau(x)$   is non-monotonic, multiple equilibria can arise for specific parameter choices. Hence, the use of fixed matching radius with $R_c,R_d<\sqrt{a/\pi}$, does not completely eliminate multiple equilibria or prevent the inefficient equilibrium.

Due to the absence of a closed-form solution for \eqref{eq:ne_oneregion}, we employ a numerical approach to identify equilibria by plotting and analyzing the left-hand side (LHS) of \eqref{eq:ne_oneregion}. We use parameter values that are  outlined in Table \ref{tab:par}. 
In Figure \ref{fig:four_radii}, we depict the  LHS of \eqref{eq:ne_oneregion} as a function of the driver supply rate $x\in(0,b)$, with the matching radius set to $R_c=R_d=$ 1 km, 2 km, 3 km, and 4 km from left to right. The figure reveals that the LHS of \eqref{eq:ne_oneregion} generally  increases  monotonically with $x$, implying there is a single equilibrium (i.e., solution to \eqref{eq:ne_oneregion}) for most values of  driver density $m$. However, near  $b=2$ pass/(min$\cdot$km$^2$), the function becomes non-monotonic, leading to at most three equilibria. Although under normal parameter settings as in Table \ref{tab:par}, these multiple equilibria are closely spaced and pose minimal threat to system efficiency, extreme cases, as described in Corollary \ref{coro:poa}, as well as scenarios where the parameters approach those in Corollary \ref{coro:poa}, can lead to catastrophic efficiency losses.   Moreover, the range of driver densities over which there are multiple equilibria expands with the matching radius, suggesting a potential for greater efficiency loss when using larger matching radius. Thus,
the choice of matching radii significantly influences the equilibria and overall system efficiency, necessitating careful selection. These observations underscore the importance of optimal radii design. We will explore the optimal radii design explored in the next section.

\textbf{Extension to Multiple Region.} When multiple regions are considered, the average pick-up time in each region becomes a non-monotonic function of the local driver supply rate, potentially resulting in a greater number of equilibria. For the single-region case, there may be up to three equilibria, but as the number of regions increases, the number of equilibria grows combinatorially. Due to the complexity of the detailed analysis, we omit it here, though this phenomenon can be verified using a simplified model, such as a spoke-and-hub network.

An important observation for systems with multiple regions is as follows: if the sum of the driver's average waiting time, $w_i^d$, and the pick-up time, $\tau_i$ is an increasing function of the driver supply rate $x_{ii}$ for any region $i\in S$, then by Corollary \ref{coro:uniqueness}, there is a unique equilibrium. However, under a general choice of matching radii $(R_{\cdot,c}, R_{\cdot,d})$, $w_i^d+\tau_i$ is not necessarily an increasing function of  $x_{ii}$. In the next section, we address this challenge and obtain monotonicity by dynamically optimizing the matching radii.

\section{Dynamically Optimize Matching Radii}\label{section:optimaldispatch}
In this section, we demonstrate how to eliminate multiple equilibria by implementing an optimal adjustment of matching radii.

The multiple equilibria are not resolved by introducing fixed matching radii. Instead of treating matching radii as fixed parameters of the dispatch algorithm, we propose to dynamically optimize them.  Specifically, we treat the waiting time $w_i^d$ and pick-up time $\tau_i$ in \eqref{eq:approximate} as functions of both the driver supply rate $x_{ii}$ and the matching radii $R_{i,c}$ and $R_{i,d}$, and use the optimal radii  $R_{i,c}$ and $R_{i,d}$  that minimize the sum of waiting and pick-up times:
\begin{equation}\label{eq:dynamic_radii}
    (R_{i,c}(x_{ii}),R_{i,c}(x_{ii}))=\arg\min_{(R_{i,c},R_{i,d})\in\mathbb{R}^2_{+}}w_i^d(x_{ii},R_{i,c},R_{i,d})+\tau_i(x_{ii},R_{i,c},R_{i,d}),\quad\forall i\in S.
\end{equation}
Here, $w_i^d$ and $\tau_i$ are given by approximation formulas in \eqref{eq:approximate} since the original formulas  \eqref{eq:waiting_time} and \eqref{eq:pick-up} can be approximated by \eqref{eq:approximate} when $R_c,R_d$ are small.
We show that the above optimization problem in \eqref{eq:dynamic_radii} can be solved analytically and the dynamic matching radii function $(R_{i,c}(x_{ii}),R_{i,d}(x_{ii}))$ is well-defined as the unique minimizer of $w_i^d+\tau_i$ for each $x_{ii}$, in the following proposition.

\begin{proposition}[\textbf{Optimal Radii Design}]\label{prop:optimal_rdaius}
For any $i\in S$, the optimization problem 
\begin{equation}\label{eq:opt}
\min_{(R_{i,c},R_{i,d})\in\mathbb{R}^2_{+}}w_i^d(x_{ii},R_{i,c},R_{i,d})+\tau_i(x_{ii},R_{i,c},R_{i,d}),
\end{equation}
with  $w_i^d$ and $\tau_i$ given in \eqref{eq:approximate} has a unique optimal solution $(R_{i,c}^*,R_{i,d}^*)$ with $R_{i,c}^*=R_{i,d}^*=R_i^*(x_{ii})$ where $R_i^*(x_{ii})$ is the unique solution to the following equation regarding $R_i$,
\begin{align}\label{eq:opt_sol}
     &\left(\frac{b_i}{4v_i}\right)^{\frac{2}{3}}\left(\erf\sqrt{-\log(1-\frac{x_{ii}}{b_i}e^{-\frac{b_i-x_{ii}}{\theta_i}\pi R_{i}^2})}-\frac{2}{\sqrt\pi}\sqrt{-\log(1-\frac{x_{ii}}{b_i}e^{-\frac{b_i-x_{ii}}{\theta_i}\pi R_{i}^2})}(1-\frac{x_{ii}}{b_i}e^{-\frac{b_i-x_{ii}}{\theta_i}\pi R_{i}^2})\right)^{\frac{2}{3}}\\
        &\quad=\frac{-\log(1-\frac{x_{ii}}{b_i}e^{-\frac{b_i-x_{ii}}{\theta_i}\pi R_{i}^2})}{\pi R_{i}^2} \notag
\end{align}
Here, $\erf$ is the error function defined by $\erf(y)=\tfrac{2}{\sqrt{\pi}}\int_0^ye^{-s^2}\dif s$.
\end{proposition}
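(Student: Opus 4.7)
My plan is to characterize critical points of $w_i^d+\tau_i$ via two sequential first-order conditions, show that any critical point forces $R_{i,c}=R_{i,d}$, and then reduce to a one-dimensional equation whose unique positive root is $R_i^*$. Throughout I will suppress the subscript $i$ and write $x$ for $x_{ii}$.

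The first step is a change of variable to expose structure. I would set $L=\mu^c\pi R_c^2$ and $K=-\log(1-(x/b)e^{-L})$ so that the steady-state identity gives $K=\mu^d\pi R_d^2$ and the coupling between the two radii becomes transparent. Introducing $g(y)=\sqrt\pi\,\erf(\sqrt y)-2\sqrt y\,e^{-y}$, the Gaussian integrals in \eqref{eq:approximate} evaluate cleanly and the objective takes the form
\[
f(R_c,R_d)=\frac{K}{x\pi R_d^2}+\frac{g(L)}{2v\sqrt{\mu^c\pi}}+\frac{bR_d\,g(K)}{2vx\sqrt K}.
\]
For each fixed $R_c$ this is strictly convex in $R_d$ (two $R_d$-terms, one $\propto R_d^{-2}$, one linear, both with positive coefficients), so there is a unique minimizer $R_d^*$ satisfying $R_d^{*3}=4vK^{3/2}/(\pi b g(K))$.

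Substituting $R_d^*$ back into $f$ yields a one-dimensional function $\tilde f(R_c)$. I would then differentiate $\tilde f$ via the chain rule through $L$ and $K$, using $g'(y)=2\sqrt y\,e^{-y}$, $K'(L)=1-e^K$, and the key steady-state identity $(x/b)e^{-L}=1-e^{-K}$. The striking—and not a priori obvious—cancellation is that after invoking the relation $\sqrt K\,e^{-K}(e^K-1)=\sqrt K(1-e^{-K})$, the $R_c$-FOC simplifies to $R_c^3=4vK^{3/2}/(\pi b g(K))$, identical in shape to the $R_d$-FOC. Hence any critical point satisfies $R_c=R_d=:R^*$, and re-expressing $g(K)=\sqrt\pi[\erf\sqrt K-(2/\sqrt\pi)\sqrt K\,e^{-K}]$ with $e^{-K}=1-(x/b)e^{-(b-x)\pi R^{*2}/\theta}$ rewrites this single FOC as exactly \eqref{eq:opt_sol}.

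For uniqueness I would study the sign of the FOC residual $\phi(R_c):=R_c^3-4vK(R_c)^{3/2}/(\pi b g(K(R_c)))$, which inherits the sign of $\tilde f'(R_c)$. The main subsidiary lemma is that $K\mapsto g(K)/K^{3/2}$ is strictly decreasing on $(0,\infty)$; this is shown by differentiation, noting that the numerator of the derivative vanishes at $0$ and has its own derivative equal to $-2K^{3/2}e^{-K}<0$. Combined with $K(R_c)$ being strictly decreasing in $R_c$ (clear from the definition of $K$), this makes the subtracted term in $\phi$ strictly decreasing, so $\phi$ is strictly increasing with $\phi(0^+)<0$ and $\phi(\infty)=\infty$. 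Hence $\tilde f$ has exactly one critical point, which is a global minimum, and $(R^*,R^*)$ is the unique minimizer of $f$. The hard part is the symbolic cancellation in the $R_c$-FOC: it is where the diagonal structure $R_{i,c}=R_{i,d}$ is revealed, and the step that justifies the simple form of the stated equation. The monotonicity lemma for $g(K)/K^{3/2}$ is a subsidiary calculus exercise, delicate but routine once the right quantity to differentiate is identified.
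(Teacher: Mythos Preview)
Your proposal is correct and follows essentially the same route as the paper. The paper performs an equivalent change of variables---introducing $u_i=\mu_i^d$, $\alpha_i=\sqrt{L}$, $\beta_i=\sqrt{K}$ subject to the steady-state constraint---then simply asserts that the first-order conditions yield $R_{i,c}=R_{i,d}$ and \eqref{eq:opt_sol}, and that ``one can check'' boundary suboptimality and uniqueness of the root. Your two-step minimization (first convex in $R_d$, then reduce to $\tilde f(R_c)$) makes the diagonal identity $R_c=R_d$ emerge through an explicit cancellation rather than by fiat, and your monotonicity lemma for $g(K)/K^{3/2}$ together with the boundary analysis of $\phi$ supplies exactly the uniqueness argument the paper leaves to the reader; so your write-up is a strictly more detailed execution of the same idea.
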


The proof can be found in Appendix \ref{appendix:proof_prop:optimal_rdaius}. Proposition \ref{prop:optimal_rdaius} show that for any driver supply rate $x_{ii}>0$ in any region $i\in S$, the optimal choice of matching radii is unique and can be found by solving \eqref{eq:opt_sol}.
In the following theorem, we demonstrate that the sojourn time $T_{ii}(x_{ii})$ as defined in Assumption \ref{assum:function_of_x} is indeed a strictly increasing function of $x_{ii}$ for all $i\in\mathcal{S}$ if the proposed dispatch algorithm with optimal adjustment of matching radii in  \eqref{eq:dynamic_radii} is adopted.
\begin{theorem}[\textbf{Monotone Sojourn Time}]\label{thm:monotone}
    Given the optimal adjustment of matching radii  $(R_{i,c}(x_{ii}),R_{i,d}(x_{ii}))$ defined by \eqref{eq:dynamic_radii}, i.e., $R_{i,c}(x_{ii})=R_{i,d}(x_{ii})=R_i^*(x_{ii})$, where $R_i^*(x_{ii})$ is the unique solution to \eqref{eq:opt_sol}, the  function 
    \[f_i(x_{ii})=w_i^d(x_{ii},R_{i,c}(x_{ii}),R_{i,d}(x_{ii}))+\tau_i(x_{ii},R_{i,c}(x_{ii}),R_{i,d}(x_{ii}))\]
   is a strictly increasing in $x_{ii}$ when $x_{ii}\in(0,b_i)$ for all  $i\in\mathcal{S}$. Consequently,   there exists a unique stationary equilibrium for game $\mathcal{G}$.
\end{theorem}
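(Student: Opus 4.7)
My plan proceeds in two stages. First, establish that $f_i$ is strictly increasing on $(0, b_i)$ for every $i \in \mathcal{S}$. Second, invoke Corollary \ref{coro:uniqueness} to conclude uniqueness of the stationary equilibrium. The second stage is immediate once the first is in hand, since $T_{ii}(x_{ii}) = f_i(x_{ii}) + t_i$ and Assumption \ref{assum:function_of_x} has already been verified under the proposed dispatch rule by Proposition \ref{prop:formulas}. The substance of the proof therefore lies in proving strict monotonicity of $f_i$.

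For monotonicity I would apply the envelope theorem. Writing
\[
g_i(x, R_c, R_d) := w_i^d(x, R_c, R_d) + \tau_i(x, R_c, R_d),
\]
Proposition \ref{prop:optimal_rdaius} provides the unique minimizer $(R_i^*(x), R_i^*(x))$ characterized by \eqref{eq:opt_sol}, and differentiability of $R_i^*$ follows from the implicit function theorem at this strict minimum. The envelope theorem then yields
\[
f_i'(x) \;=\; \left.\frac{\partial g_i}{\partial x}(x, R_c, R_d)\right|_{R_c = R_d = R_i^*(x)},
\]
so it suffices to show this partial derivative is strictly positive. Note that pointwise monotonicity of $g_i(\cdot, R_c, R_d)$ at \emph{arbitrary} fixed radii cannot be expected: Proposition \ref{prop:multiple_equilibria} exhibits non-monotonic $T_{ii}$ for the fixed-radius NN rule, so the optimal dependence $R = R_i^*(x)$ is essential.

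For the waiting-time piece, set $u := (x/b_i)\, e^{-(b_i - x)\pi R_c^2/\theta_i}$ and $\alpha := \pi R_c^2/\theta_i$. A direct differentiation shows $\partial_x w_i^d > 0$ if and only if $\psi(u) := u(1 + x\alpha) + (1-u)\log(1-u) > 0$. Since $\psi(0) = 0$ and $\psi'(u) = x\alpha - \log(1-u) > x\alpha > 0$ on $(0,1)$, the inequality holds; thus $w_i^d$ is strictly increasing in $x$ at \emph{every} fixed $(R_c, R_d)$, including the optimum.

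The main obstacle is the pickup-time piece $\tau_i$, which depends on $x$ both explicitly and through the decreasing quantity $\mu_i^c = (b_i - x)/\theta_i$ and the increasing quantity $\mu_i^d = x w_i^d$. Here I would exploit the steady-state identity $x e^{-\mu_i^c \pi R^2} = b_i(1 - e^{-\mu_i^d \pi R^2})$ to rewrite
\[
\tau_i\big|_{R_c = R_d = R} = \frac{1}{v_i}\int_0^R e^{-\mu_i^c \pi r^2}\,dr + \frac{b_i}{v_i x}\int_0^R e^{-\mu_i^d \pi r^2}\,dr - \frac{b_i R}{v_i x},
\]
then apply the change of variable $s = r\sqrt{\mu \pi}$ to recast the integrals via $\erf$ (matching the terms appearing in \eqref{eq:opt_sol}), differentiate in $x$, and finally substitute the FOC \eqref{eq:opt_sol} at $R = R_i^*(x)$ to collapse the explicit $R$-dependence. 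I expect the remaining expression to reduce, after intricate but routine algebra, to a positive multiple of $\psi(u)$, inheriting positivity from the waiting-time analysis. This sign-analysis at the optimum is the technical crux; once completed, $f_i'(x) > 0$ on $(0, b_i)$, strict monotonicity of $T_{ii}$ follows, and Corollary \ref{coro:uniqueness} delivers the unique stationary equilibrium.
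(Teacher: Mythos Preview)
Your overall strategy---envelope theorem to reduce $f_i'(x)$ to $\partial_x g_i$ at the optimal radii, then Corollary~\ref{coro:uniqueness}---is exactly what the paper does. The waiting-time argument showing $\partial_x w_i^d>0$ at \emph{every} fixed $(R_c,R_d)$ via $\psi(u)$ is correct and self-contained.

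The gap is in the final step. Your expectation that after substituting the FOC the pick-up contribution collapses to a positive multiple of $\psi(u)$ is not borne out; the actual derivative has a different structure. The paper carries out the envelope computation in the transformed variables $(u_i,\alpha_i,\beta_i)$ of the equivalent problem~\eqref{eq:opt_sim}, and obtains $f_i'(x)$ as a sum of two explicit terms: one with numerator $\erf(\alpha)-\tfrac{2}{\sqrt\pi}\alpha e^{-\alpha^2}$ and one with numerator $\tfrac{2}{\sqrt\pi}\beta e^{-\beta^2}-3\erf(\beta)+\tfrac{4}{\sqrt\pi}\beta$, evaluated at the optimizers. Positivity of each is a one-variable calculus check (differentiate in $\alpha$ or $\beta$; for the second, reduce to $e^{y}>1+y$), unrelated to $\psi(u)$. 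The reparametrization is what makes the derivative tractable: it separates the $x$-dependence from the radius-dependence cleanly, whereas in the original $(R_c,R_d)$ coordinates the FOC~\eqref{eq:opt_sol} mixes $x$ and $R$ in a way that resists the simplification you anticipate. So the plan is sound up to the point where you speculate on the algebraic endgame; to close it, switch to the variables of~\eqref{eq:opt_sim} before applying the envelope theorem.
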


The proof can be found in Appendix \ref{appendix:proof_thm:monotone}.  Theorem \ref{thm:monotone} establishes that the optimal adjustment of matching radii, as defined in \eqref{eq:dynamic_radii}, ensures a unique equilibrium, thereby eliminating the inefficient equilibrium with long pick-up times. Importantly, this theorem also demonstrates that our optimal adjustment guarantees the optimization problem \eqref{eq:opt_obj} is strictly convex--a property whose significance has not been sufficiently emphasized so far.
When the optimization problem \eqref{eq:opt_obj} is nonconvex, multiple equilibria may arise, and computing these equilibria (or KKT points) becomes computationally challenging. In contrast, the strict convexity of \eqref{eq:opt_obj} enables the use of efficient gradient-based numerical methods to compute the unique equilibrium, enhancing the stability and predictability of the ride-hailing system.
Moreover, by minimizing the driver's total waiting time (the sum of waiting and pick-up times) in each region, our proposed algorithm also enhances the efficiency of the ride-hailing system. The improvement is further confirmed through the numerical experiments and simulations conducted for a single region.

\begin{figure}[h]
    \begin{minipage}{0.32\textwidth}
        \includegraphics[width=\textwidth]{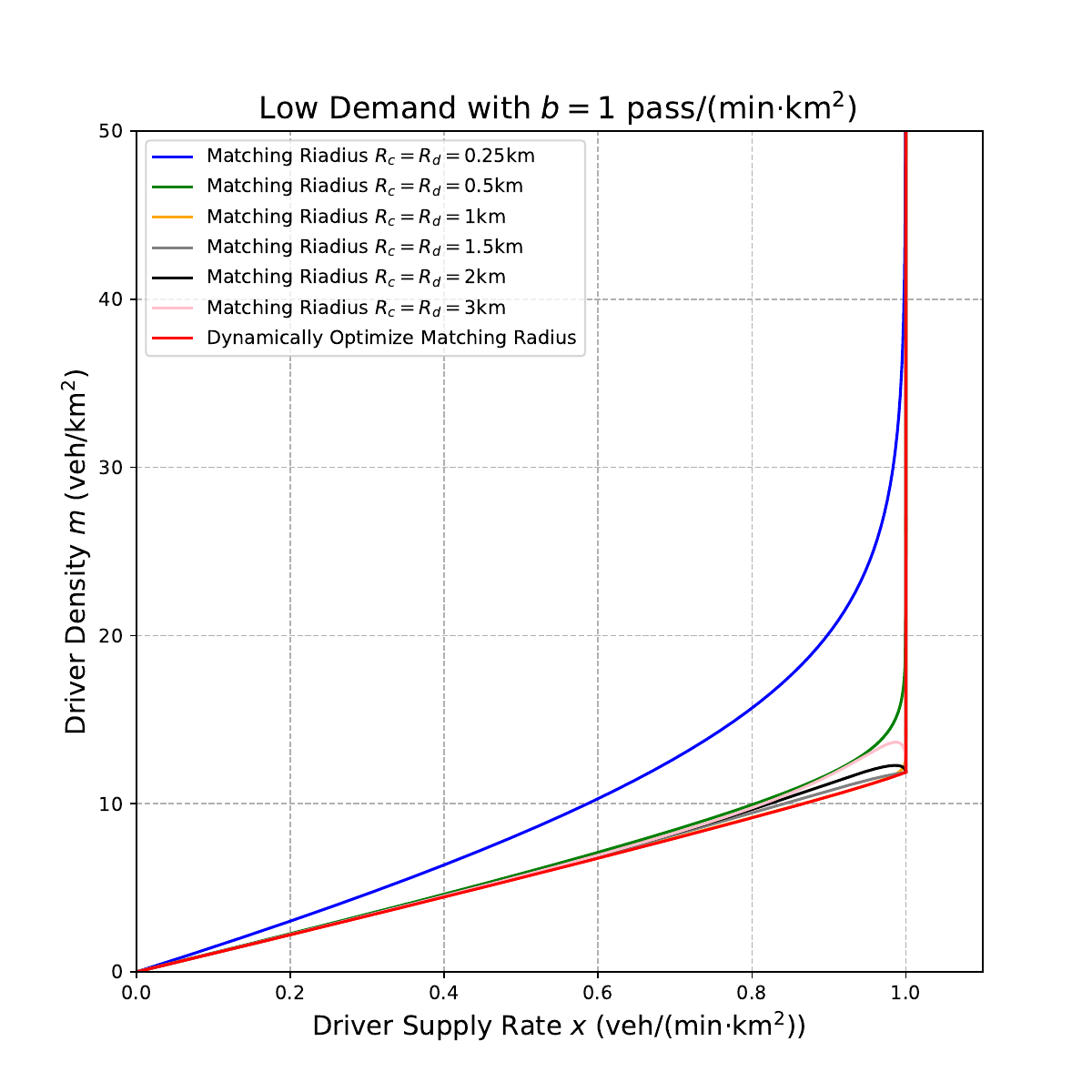}
    \end{minipage}
    \begin{minipage}{0.32\textwidth}
        \includegraphics[width=\textwidth]{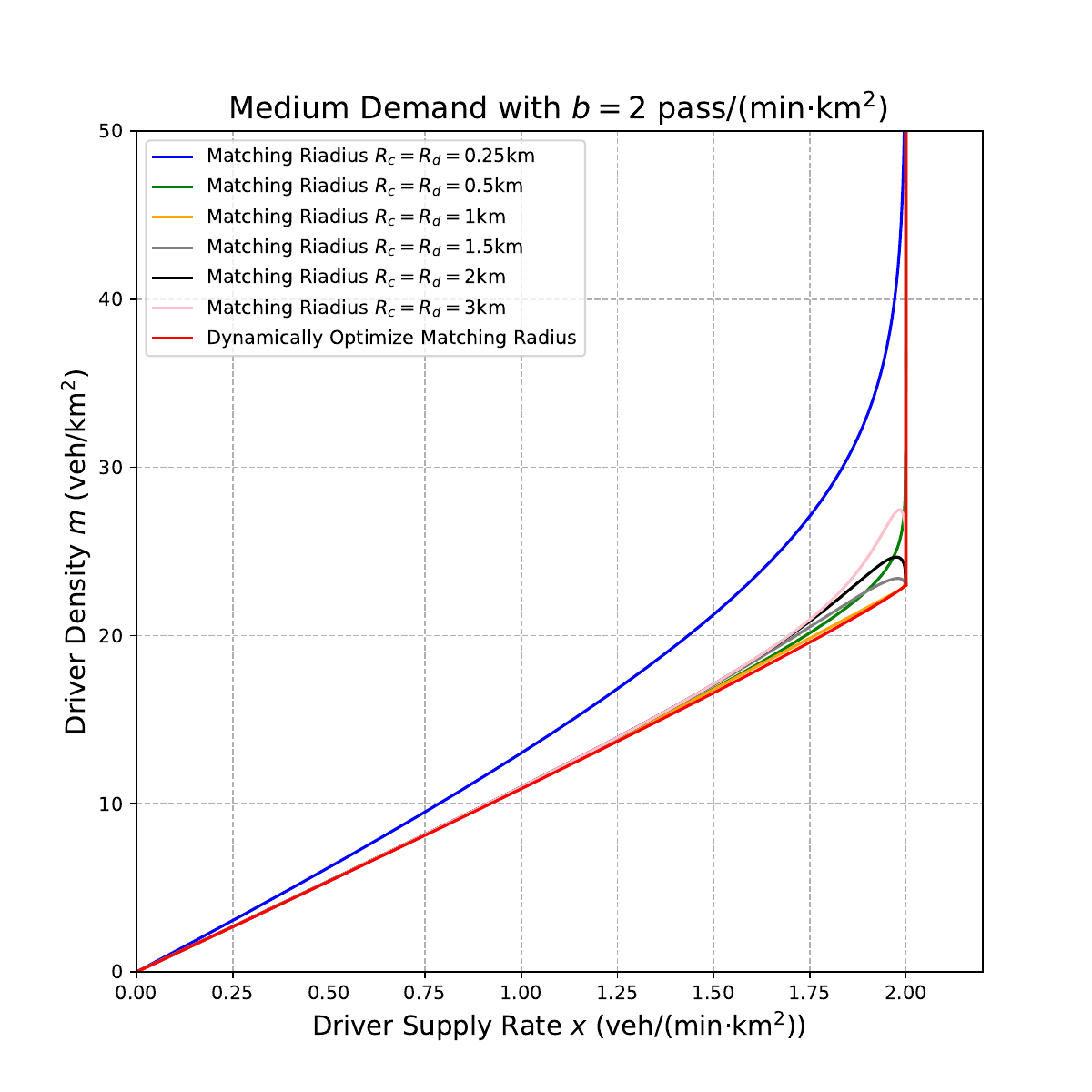}
    \end{minipage}
    \begin{minipage}{0.32\textwidth}
        \includegraphics[width=\textwidth]{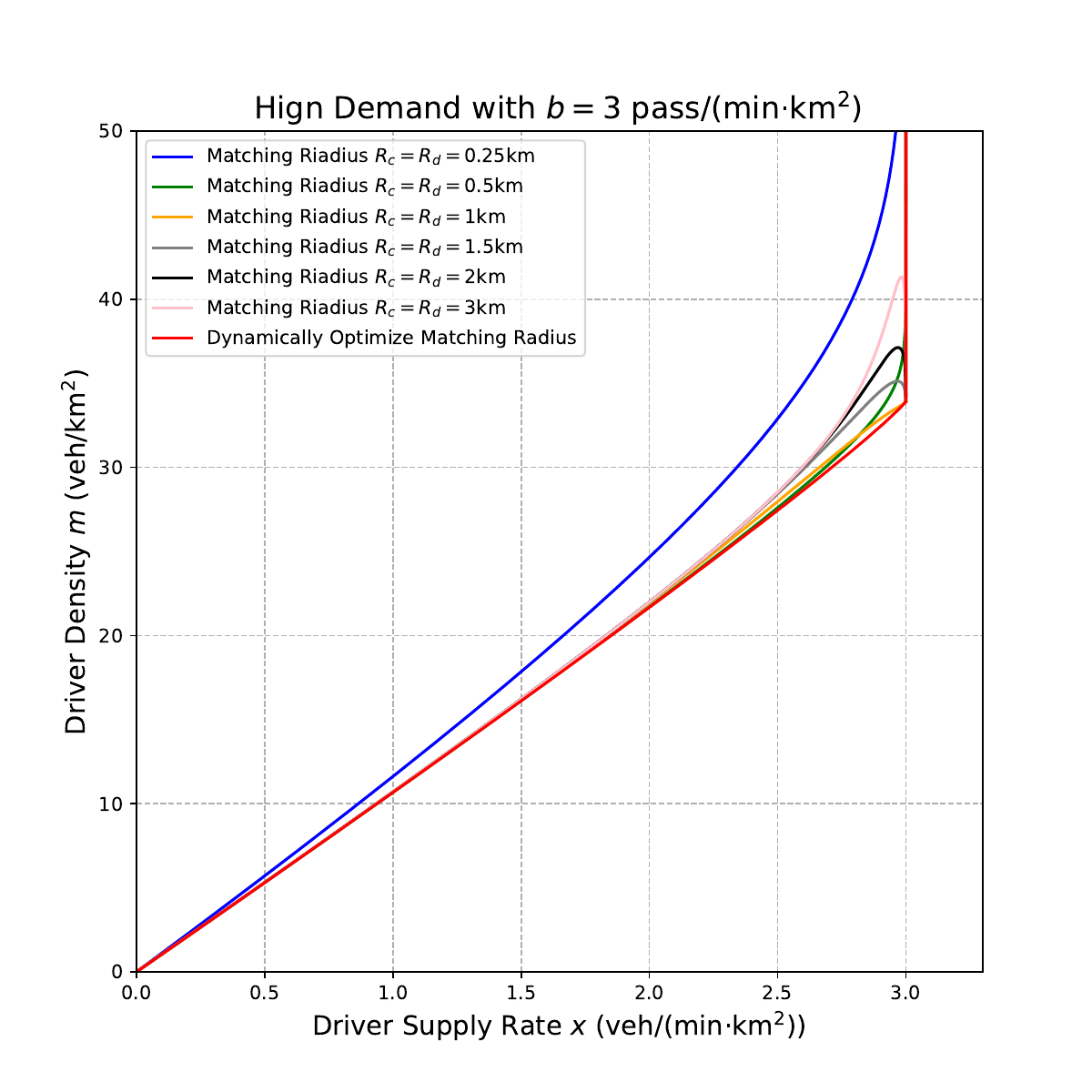}
    \end{minipage}
    \caption{Figure showing driver density as a function of driver supply rate (RHS of \eqref{eq:ne_oneregion}), with various customer arrival rates and matching radii, compared to the proposed dynamic optimal matching radius \eqref{eq:dynamic_radii}. Panels from left to right represent low, medium, and high demand scenarios with customer arrival rates of  $1$, $2$, $3$ pass/(min$\cdot$km$^2$), respectively. The proposed dynamic optimal matching radius (highlighted by the red color) outperforms all fixed radii, and under this dynamic scheme, driver density strictly increases with the driver supply rate, ensuring a unique equilibrium.}
    \label{fig:dynamic}
\end{figure}

\textbf{Numerical Test}  We conduct a numerical study in a single-region scenario as described in Section \ref{section:original_NN}. Fig. \ref{fig:dynamic} illustrates driver density as a function of driver supply rate (RHS of \eqref{eq:ne_oneregion}), across various customer arrival rates and matching radii, and compares these with the proposed  optimal adjustment of matching radius \eqref{eq:dynamic_radii}. The sub-figures, from left to right, depict low, medium, and high demand scenarios with customer arrival rates of  $1$, $2$, $3$ pass/(min$\cdot$km$^2$), respectively. The values of other parameters are the same as in Table \ref{tab:par}.  We observe that no dispatch algorithm with fixed radius remains optimal across all driver supply rates. The proposed algorithm with optimal adjustment of matching radius, highlighted in red, surpasses the performance of all algorithms with fixed radii. Specifically, for each fixed driver density, the equilibrium driver supply rate under the optimal adjustment of matching radius scheme is always the highest.
Furthermore, under this optimal adjustment of matching radius scheme, driver density exhibits a strict increase with the driver supply rate, thereby ensuring a unique equilibrium.

\section{Simulation}

\begin{table}[h!]\footnotesize
  \begin{tabular}{m{1.5cm}<{\centering}m{1.25cm}<{\centering}m{1.25cm}<{\centering}m{1.25cm}<{\centering}m{1.25cm}<{\centering}m{1.25cm}<{\centering}m{1.25cm}<{\centering}m{1.25cm}<{\centering}m{1.25cm}<{\centering}}
      \toprule[1pt]
   Matching radius $R$ (km) &   Total number of arrival customers & Mean waiting time for customers (min) & Mean pick-up time (min) & Mean waiting time for drivers (min) & Mean total waiting time for customers (min)  & Mean total waiting time for drivers (min) & Completion rate  \\
      \midrule[0.5pt]
 0.5&14240 &4.20& \cc{0.78}& 13.85& 4.98& 14.63& 0.591\\
 1.0&14192 &2.73& 1.09& 7.01& 3.82& 8.10& 0.728\\
 1.5&14309 &2.10& 1.43& 4.12& 3.53& 5.55& 0.794\\
 2.0&14501 &1.72& 1.72& 2.68& \cc{3.44}& 4.40& 0.826 \\
 2.5&14329 &1.60& 2.04& 1.64& 3.64& 3.68& 0.842\\
 3.0&14229 &1.33& 2.33& 1.33& 3.66& 3.66& 0.868\\
3.5& 14454 &1.49& 2.56& 0.81& 4.05& \cc{3.37}& 0.853\\
 4.0&14275 &1.33& 2.84& 0.67& 4.17& 3.51& 0.864\\
 4.5&14557 &1.58& 2.92& 0.53& 4.50& 3.45& 0.842\\
 5.0&14292 &1.45& 3.16& \cc{0.43}& 4.61& 3.59& 0.857\\
Dynamic Adjustment of Radius& 14355&\cc{1.27}&2.40&1.21&3.67&3.61&\cc{0.870}\\
    \bottomrule[1pt]
  \end{tabular}
  \caption{Comparison of Performance Metrics for Fixed Matching Radii and Dynamic Adjustment of Radius}
  \label{table:comprehensive}
\end{table}

We have theoretically proved and numerically shown that the proposed optimal adjustment of matching radius  ensures a unique equilibrium and improve efficiency. To further verify the effectiveness of the proposed method, we
simulate the stochastic online matching process in a single region. We  apply the nearest-neighbor dispatch with the optimal adjustment of matching radius where the optimal matching radius is computed for each value of driver supply rate  using the equation \eqref{eq:opt_sol}. The experiments are performed using Python
3.11 on a 16-core CPU.

The simulator is established on a space with 10 km $\times$ 10 km area. The basic setting are as follows.
The simulation duration is 24 hour, i.e., 1440 min. The arrival time of customer follows a exponential distribution with rate 10 pass/min. Their locations are randomly generated. Customers are impatient and abandon the orders if the waiting time (until they are matched with the drivers) exceeds the maximum time they are willing to wait, which is exponentially distributed with mean 10 min.  There are in total 200 drivers and the vehicle speed is 0.4 km/min, i.e., 24 km/h. The travel time from customer's origin to customer's destination is exponentially distributed with mean 20 min. 
 After dropping a passenger, a vehicle becomes available for matching immediately and the  its new location is randomly generated.

The optimal adjustment of matching radius in the simulation works as follows: we solve \eqref{eq:opt_sol}  to determine the optimal
matching radius 
each time when a customer sends a request or when a driver becomes available where the driver supply rate is estimated by the average driver supply rate over the last hour. 
 We compare the performance of  optimal adjustment of matching radius and benchmark scenarios with fixed matching radii $R_c=R_d=R$  where the values of  $R$ is taken from 0.5 km to 5 km with a step size of 0.5 km. Evaluation metrics include the mean
pick-up time, mean waiting time, mean total waiting time (sum of pick-up time and waiting time) of passengers and drivers, and the
completion rate (the proportion of passenger requests served successfully). The results in Table \ref{table:comprehensive} show that the dynamic
optimization strategy achieves a higher completion rate (0.870) than all the benchmark scenarios. The strategy also
achieves a reasonably low mean total waiting time (pick up time plus waiting time) for both customers and drivers. Overall, we can claim that  optimal adjustment of matching radius achieves near optimal system performance among all the simulated cases.

\section{Conclusion}

In this study, we have made significant strides in understanding and enhancing the operational dynamics of ride-hailing platforms through the lens of mean-field game theory. Our primary contribution lies in the introduction of a novel mean field game model, which not only systematically analyzes the intricate dynamics of these platforms but also establishes a profound connection between the equilibrium points of the game and the KKT points of an associated optimization problem. This theoretical framework offers a fresh perspective on platform operations, enabling a deeper understanding of the underlying mechanisms that govern driver and customer interactions.
Furthermore, our analysis of  ride-matching algorithms in the presence of customer abandonment behavior has unveiled critical insights. We have identified that these algorithms can inadvertently lead to non-monotone sojourn times, resulting in the emergence of multiple equilibria and the problematic inefficient ones. The inefficient equilibria, characterized by drivers engaging in inefficient pursuits of distant requests, pose a significant challenge to the overall efficiency of ride-hailing systems.
To address these inefficiencies, we have proposed an innovative two-matching-radius nearest-neighbor dispatch algorithm. Our rigorous theoretical proofs and empirical validations demonstrate that the dynamic and optimal adjustment of matching radii can effectively ensure that sojourn times increase monotonically with the driver supply rate. This breakthrough not only provides a robust solution to the multiple and inefficient equilibria problem but also enhances the overall efficiency and reliability of ride-hailing platforms.

We conclude this paper by envisioning a roadmap for future research that extends the boundaries of our current work. One promising avenue lies in the exploration of pricing mechanisms within ride-hailing platforms. Inspired by \cite{castillo2023}'s findings, which indicate that surge pricing outperforms matching adjustments in mitigating the inefficient equilibria, we aim to conduct a comprehensive and theoretical comparison using our mean field game framework. The findings in current work suggest that an optimal matching strategy can effectively eliminate inefficient equilibria without reliance on any  pricing designs.  We expect that incorporating pricing design into our model will unveil new dimensions and insights, further enriching our understanding of ride-hailing dynamics.
Another extension of our model is its application to delivery problems, where couriers, analogous to drivers in our setting, have the capability to deliver multiple parcels simultaneously. The analysis of spatial queueing networks in this context is more complex and challenging. We are particularly interested in exploring whether our approach of expressing sojourn times as functions of throughput can be adapted and applied to this more intricate setting. Such an extension could provide valuable insights into the optimization of delivery systems, contributing to enhanced efficiency and service quality.


\bibliographystyle{ACM-Reference-Format}
\bibliography{biblio-1}

\newpage
\appendix

\section{Derivation of Formulas for Waiting Times and Pick-up Times
}\label{appendix:proof_formulas}

In this appendix we establish formulas for the average waiting time and pick-up time as claimed in Assumption \ref{assum:function_of_x} based on the customer abandonment behavior defined in Section \ref{section:Abandonment} and the platform’s dispatching algorithm defined in Section \ref{section:Matching}.

 Let  each region $i$'s area be denoted by $a_i$.  
We introduce two matching radii in region $i$, $R_{i,c},R_{i,d}\in(0,\sqrt{a_i/\pi})$, where  $\sqrt{a_i/\pi}$ is the approximate radius of region $i$.   
 We denote the average vehicle speed by $v_i$. Let $\mu_i^c$ and $\mu_i^d$ denote the average densities of waiting customers and waiting drivers respectively across the region $i$. Recall that $x_{ii}$ denotes the rate of drivers that  become available and choose to serve region $i$ per unit area, and $b_i$, $\theta_i$ the arrival rate, abandonment rate of customers per unit area, respectively.
 
We assume the locations of waiting customers and waiting drivers are drawn from independent uniform distributions across region \(i\) at any moment. This assumption is justified by the randomness of customers' origins and destinations, as well as the mobility of the vehicles. Given the matching radius $R_{i,c}$, when a driver becomes available in region \(i\) and decides to provide service in this region, the probability that the driver is within distance \(R_{i,c}\) of a randomly chosen waiting customer in region \(i\) is \(\tfrac{\pi R_{i,c}^2}{a_i}\). Assume there are $N_i^c$ waiting customers in region $i$ at this moment, then the driver is not matched with any of these waiting customers with probability
\[(1-\frac{\pi R_{i,c}^2}{a_i})^{N_i^c}.\]
Similarly, given the matching radius $R_{i,d}$, if there are 
$N_i^d$ waiting drivers in region $i$, a new arriving customer is not matched with any of these waiting customers with probability
\[(1-\frac{\pi R_{i,d}^2}{a_i})^{N_i^d}.\]

Now we can find the formula for driver's average waiting time. A key observation is that for drivers, the rate at which they enter the waiting status equals the rate at which they exit it. This exit rate is equivalent to the rate at which arriving customers are matched with drivers.
We apply  mean-field approximation to replace stochastic variables, such as $N_i^c$ and $N_i^d$,  with their expected values: $\mu_i^ca_i$ and $\mu_i^da_i$ respectively. Then, in steady state, the relation between $\mu_i^c$  and $\mu_i^d$ is
\begin{equation}
    x_{ii}(1-\frac{\pi R_{i,c}^2}{a_i})^{\mu_i^ca_i}+b_i(1-\frac{\pi R_{i,d}^2}{a_i})^{\mu_i^da_i}=b_i\,.\label{eq:steady_state_ap}
\end{equation}
Note that $\mu_i^c$ and $x_{ii}$ satisfy \eqref{eq:abandon}, $\mu_i^c$ can be expressed as a function of $x_{ii}$,
\begin{equation}
    \label{eq:u_c_ap}
    \mu_i^c(x_{ii})=\frac{1}{\theta}(b_i-x_{ii}).
\end{equation}
As a result, by \eqref{eq:steady_state}, the unknown variable $\mu_i^d$ can also be expressed as a function of $x_{ii}$, 
\begin{equation}
    \label{eq:u_d}
   \mu_i^d(x_{ii})=\frac{\log(1-\frac{x_{ii}}{b_i}(1-\frac{\pi R_{i,c}^2}{a_i})^{\frac{b_i-x_{ii}}{\theta_i}a_i})}{a_i\log(1-\frac{\pi R_{i,d}^2}{a_i})}\,.
\end{equation}
Let $w_i^d$ denote the average waiting time of the drivers. By Little's law, $\mu_i^d=x_{ii}w_i^d$. Hence, $w_i^d$ can be expressed as a function of $x_{ii}$,
\begin{equation}\label{eq:waiting_time_ap}
    w_i^d(x_{ii})=\frac{\log(1-\frac{x_{ii}}{b_i}(1-\frac{\pi R_{i,c}^2}{a_i})^{\frac{b_i-x_{ii}}{\theta_i}a_i})}{x_{ii}a_i\log(1-\frac{\pi R_{i,d}^2}{a_i})}\,.
\end{equation}
One can check that $w_i^d(x_{ii})\rightarrow\infty$ as $x_{ii}\rightarrow b_i$ as we claimed in Assumption \ref{assum:function_of_x}.

Next, we derive the formula for expected pick-up time, which amounts to finding the expected pick-up distance. We compute this in two distinct cases:
\begin{itemize}
    \item \emph{Matched with no waiting}.
   In this case, a driver arrives and is successfully matched with a waiting customer. The probability of this occurring is 
    \[1-(1-\frac{\pi R_{i,c}^2}{a_i})^{\mu_i^ca_i}.\]
   If matched, the probability that the pick-up distance is less than $r$ is the probability that at least one waiting customer is within distance $r\in(0,R_{i,c}]$ of the driver, which is 
    \[1-(1-\frac{\pi r^2}{a_i})^{\mu_i^ca_i}\,.\]
    Consequently, the conditional expectation of the pick-up distance is 
    \[\frac{1}{1-(1-\frac{\pi R_{i,c}^2}{a_i})^{\mu_i^ca_i}}\int_0^{R_{i,c}}r\dif\Big(1-(1-\frac{\pi r^2}{a_i})^{\mu_i^ca_i}\Big)\]
    \item   \emph{Matched with waiting}. In this case, a driver arrives and must wait until they are matched with a newly arriving customer. The probability of this occurring is,
    \[(1-\frac{\pi R_{i,c}^2}{a_i})^{\mu_i^ca_i}.\]
    The expected pick-up distance can be computed from the perspective of the arriving customer. The probability that a customer is successfully matched with a waiting driver is,
    \[1-(1-\frac{\pi R_{i,d}^2}{a_i})^{\mu_i^da_i}.\]
   The probability that the pick-up distance is less than $r$ is the probability that at least one waiting driver is within distance $r\in(0,R_{i,d}]$ of the customer, which is 
    \[1-(1-\frac{\pi r^2}{a_i})^{\mu_i^da_i}\,.\]
Consequently, the conditional expectation of the pick-up distance is
    \[\frac{1}{1-(1-\frac{\pi R_{i,d}^2}{a_i})^{\mu_i^da_i}}\int_0^{R_{i,d}}r\dif\Big(1-(1-\frac{\pi r^2}{a_i})^{\mu_i^da_i}\Big)\]
    
\end{itemize}
In summary, the expected pick-up time is a function of $x_{ii}$ as follows,
\begin{equation}
    \label{eq:pick-up_ap}
    \tau_i(x_{ii})=\frac{1}{v_i}\int_0^{R_{i,c}}(1-\frac{\pi r^2}{a_i})^{\mu_i^ca_i}-(1-\frac{\pi R_{i,c}^2}{a_i})^{\mu_i^ca_i}\dif r+\frac{b_i}{v_ix_{ii}}\int_0^{R_{i,d}}(1-\frac{\pi r^2}{a_i})^{\mu_i^da_i}-(1-\frac{\pi R_{i,d}^2}{a_i})^{\mu_i^da_i}\dif r,
\end{equation}
where $\mu_i^c$ and $\mu_i^d$ are functions of $x_{ii}$ as in  \eqref{eq:u_c} and \eqref{eq:u_d} respectively.

\textbf{Approximate Formulas}. As the formulas for waiting time and pick-up time are too complex, we derive approximate formulas below.
We apply a \emph{change of variable} trick, 
\begin{equation}\label{eq:change_of_variable_ap}
    -\frac{a_i}{\pi}\log
(1-\frac{\pi R_{i,\cdot}^2}{a_i})\mapsto R_{i,\cdot}^2,
\end{equation}
where we keep the same notation for the new radius with a slight abuse of notation. Then,
\eqref{eq:steady_state} can be rewritten as,
\[   x_{ii}e^{-\mu_i^c\pi R_{i,c}^2}+b_ie^{-\mu_i^d\pi R_{i,d}^2}=b_i.\]
Furthermore, the drivers' waiting times and pick-up times can be approximated as:
\begin{equation}\label{eq:approximate_ap}
    \begin{cases}
    w_i^d(x_{ii})&=\frac{-\log(1-\frac{x_{ii}}{b_i}e^{-\frac{b_i-x_{ii}}{\theta_i}\pi R_{i,c}^2})}{x_{ii}\pi R_{i,d}^2}\,,\\
    \tau_i(x_{ii})&=\frac{1}{v_i}\int_0^{R_{i,c}}e^{-\mu_i^c(x_{ii})\pi r^2}-e^{-\mu_i^c(x_{ii})\pi R_{i,c}^2}\dif r+\frac{b_i}{v_ix_{ii}}\int_0^{R_{i,d}}e^{-\mu_i^d(x_{ii})\pi r^2}-e^{-\mu_i^d(x_{ii})\pi R_{i,d}^2}\dif r\,.
\end{cases}
\end{equation}
Note that  the left-hand side of \eqref{eq:change_of_variable} is approximately $R_{i,\cdot}^2$ when $R_{i,\cdot}<<\sqrt{a_i}$ as $\log(1-y)\approx-y$ when $y$ approaches 0 and goes to $\infty$ as $\pi R_{i,\cdot}^2\rightarrow a_i$. Hence,
when calculating the pick-up time we do not multiply the change of variable factor when computing the integral.  Numerical results show that this is still a good estimation of the original pick-up time even when the original radius $R_i$ is large.
There is an alternative point of view which yields the same approximate formula in \eqref{eq:approximate}: assume each region $i$ is an unbounded plane with matching radii $R_{i,c},R_{i,d}\in(0,\infty)$.  In this unbounded plane, the locations of waiting customers and drivers follow \emph{spatial Poisson processes}, where $\mu_i^c$ and $\mu_i^d$ are the waiting customer and waiting driver intensity parameters, respectively. In fact, the change of variable in equation \eqref{eq:change_of_variable} defines a map from the original finite region to an infinite region. Although this transformation distorts the Euclidean metric in the original region, the distortion is negligible when the matching radius is sufficiently small. For large matching radii, while the distortion cannot be entirely ignored, the introduced approximation error remains acceptable according to numerical results.

\section{Approximate Pick-up Time by Square Root Law} \label{appendix:pickup}
 When $R_{c}=R_{d}=\sqrt{a/\pi}$ (or $\infty$), our formulae for $\tau_i$ in \eqref{eq:pick-up} and \eqref{eq:approximate} can still be used to estimate the pick-up time. In fact, both formulas give the same asymptotic result. 
We assume $\mu^c=0,\mu^d>0$. In this case, $x=b$ as $\mu^c=0$. 
Applying \eqref{eq:pick-up}, the  average pick-up time is estimated by
\[\tau(x)=\frac{1}{v}\int_0^{\sqrt{\frac{a}{\pi}}}(1-\frac{\pi r^2}{a})^{\mu^dA}\dif r\overset{\frac{\pi r^2}{a}\mapsto s^2}{=}\frac{1}{v}\sqrt{\frac{a}{\pi}}\int_0^{1}(1-s^2)^{\mu^da}\dif s = \left(\frac{1}{\sqrt{\pi}}\int_0^{1}(1-s^2)^{\mu^da}\dif s\right)\frac{\sqrt{a}}{v}.\]
Using the change of variable trick (i.e., \eqref{eq:approximate}), the  average pick-up time is 
\begin{equation}
    \label{eq:sq_root_ap}
    \tau(x)=\frac{1}{v}\int_0^{\infty}e^{-\mu^d\pi r^2}\dif r=\frac{1}{2v\sqrt{\mu^d}}=\frac{1}{2\sqrt{\mu^da}}\frac{\sqrt{a}}{v},
\end{equation}
which is exactly the famous square root law in literature (e.g., see \cite{Besbes2022}). Recall that $\sqrt{\frac{a}{\pi}}$ estimates the radius  of the whole region and $\mu^d a$ is the mean of the number of waiting drivers ($N^d$) in the region.
We can numerically check that for $n\in \mathbb{N}^+$,
\[\left(\frac{1}{\sqrt{\pi}}\int_0^{1}(1-s^2)^{n}\dif s\right)-\frac{1}{2\sqrt{n}}\]
is small even for small $n$ and goes to 0 as $n\rightarrow\infty$. 
In fact, for $y\in(0,\infty)$
\[\int_0^{1}(1-s^2)^{y}\dif s\overset{s^2\mapsto s}{=}\frac{1}{2}\int_0^{1}s^{-\frac{1}{2}}(1-s)^{y}\dif s=\frac{1}{2}B(\frac{1}{2},y+1)=\frac{1}{2}\frac{\Gamma(\frac{1}{2})\Gamma(y+1)}{\Gamma(y+\frac{3}{2})}\overset{y\rightarrow\infty}{\rightarrow}\frac{\sqrt{\pi}}{2\sqrt{y}},\]
where $B(y_1,y_2)=\int_0^{1}s^{y_1-\frac{1}{2}}(1-s)^{y_2-1}\dif t$ is the Euler's beta function and $\Gamma(y)=\int_0^{\infty}s^{y-1}e^{-s}\dif s$ is the gamma function.
Similarly, when $\mu^c>0,\mu^d=0$, $\tau(x)\rightarrow\tfrac{1}{2v\sqrt{\mu^c}}$ if $\mu^ca=\frac{b-x}{\t}a\rightarrow\infty$. Therefore, 
we will use the square root law \eqref{eq:sq_root} to estimate the pick-up time.

\section{Proof of Proposition \ref{prop:optimal_rdaius}} \label{appendix:proof_prop:optimal_rdaius}

\begin{proof}
    The optimization problem \eqref{eq:opt} is equivalent to the following problem,
    \begin{align}
        \min&\quad \frac{u_i}{x_{ii}}+\frac{1}{2v_i}\sqrt{\frac{\theta_i}{b_i-x_i}}\left(\erf\alpha_i-\frac{2}{\sqrt\pi}\a_i e^{-\a_i^2}\right)+\frac{b_i}{2v_ix_{ii}\sqrt {u_i}}\left(\erf\b_i-\frac{2}{\sqrt\pi}\b_i e^{-\b_i^2}\right)\label{eq:opt_sim}\\
        \text{s.t.}&\quad x_{ii}e^{-\a_i^2}+b_ie^{-\b_i^2}=b_i\notag\\
        \text{over}&\quad u_i,\a_i,\b_i\in\mathbb{R}_+\notag
    \end{align}
    which is obtained from \eqref{eq:opt}  by  changes of variables as follows,
\begin{align*}
    \left\{
    \begin{aligned}
              \frac{-\log(1-\frac{x_{ii}}{b_i}e^{-\frac{b_i-x_{ii}}{\theta_i}\pi R_{i,c}^2})}{\pi R_{i,d}^2}& \mapsto u_i\\
    \frac{b_i-x_{ii}}{\theta_i}\pi R_{i,c}^2 &\mapsto \a_i^2\\
          -\log(1-\frac{x_{ii}}{b_i}e^{-\frac{b_i-x_{ii}}{\theta_i}\pi R_{i,c}^2}) &\mapsto \b_i^2
    \end{aligned}
    \right.
\end{align*}
Note that the objective function of  \eqref{eq:opt_sim} is smooth for $u_i,\a_i,\b_i>0$. The first order condition of \eqref{eq:opt_sim} gives exactly $R_{i,c}=R_{i,d}$ and equation \eqref{eq:opt_sol}. One can check the following facts. 1) As $u_i,\a_i,\b_i$ $\rightarrow0$ or $\rightarrow\infty$, the asymptotical values of the objective function of \eqref{eq:opt_sim} are suboptimal. Consequently, equation \eqref{eq:opt_sol} is also a sufficient condition.  2) The equation \eqref{eq:opt_sol} has a unique solution in $\mathbb{R}_+$. Hence, the optimal solution to \eqref{eq:opt_sim} can be obtained by solving the equation \eqref{eq:opt_sol}. The proposition is proven.
\end{proof}

\section{Proof of Theorem \ref{thm:monotone}} \label{appendix:proof_thm:monotone}

\begin{proof}
    We prove the theorem using the equivalent optimization problem \eqref{eq:opt_sim}. As the objective function of  \eqref{eq:opt_sim} is smooth, it is legitimate to apply Envelop Theorem. By Envelop Theorem, one can check that the derivative of $f_i$ with respect to $x_{ii}$ is equal to
    \begin{align*}
        \frac{\partial f_{i}}{\partial x_{ii}}&=\frac{\sqrt\t_i}{4v_i(b_i-x_{ii})^{\frac{3}{2}}}\left(\erf\Big(\sqrt{\pi\frac{b_i-x_{ii}}{\t_i}}R_i^*(x_{ii})\Big)-\frac{2}{\sqrt\pi}\sqrt{\pi\frac{b_i-x_{ii}}{\t_i}}R_i^*(x_{ii})e^{-\frac{b_i-x_{ii}}{\theta_i}\pi (R_i^*(x_{ii}))^2 }\right)\\
        &\quad+\frac{b_i^{\frac{2}{3}}\left(\frac{2}{\sqrt\pi}\b^*_{i}(x_{ii})e^{-(\b^*_{i}(x_{ii}))^2}-3\erf\big(\b_i^*(x_{ii})\big)+\frac{4}{\sqrt\pi}\b^*_{i}(x_{ii})\right)}{(4v_i)^{\frac{2}{3}}x^2_{ii}\left(\erf\big(\b_i^*(x_{ii})\big)-\frac{2}{\sqrt\pi}\b^*_{i}(x_{ii})e^{-(\b^*_{i}(x_{ii}))^2}\right)^{\frac{1}{3}}}
    \end{align*}
    where
    \[\b_i^*(x_{ii})=\sqrt{-\log(1-\frac{x_{ii}}{b_i}e^{-\frac{b_i-x_{ii}}{\theta_i}\pi (R^*_{i}(x_{ii}))^2}) },\]
   and $R_i^*(x_{ii})$ is the unique solution to \eqref{eq:opt_sol}. One can check that this derivative is strictly positive when $x_{ii}\in(0,b_i)$ for all $i\in\mathcal{S}$.  
   Since $f_i(x_{ii})$ is a strictly increasing function of $x_{ii}$ when $x_{ii}\in(0,b_i)$ for all  $i\in\mathcal{S}$, the uniqueness of stationary equilibrium follows from Corollary \ref{coro:uniqueness}.
\end{proof}

\end{document}